\documentclass[10pt,twocolumn]{article}

\usepackage[utf8]{inputenc}
\usepackage[T1]{fontenc}
\usepackage[expansion=false]{microtype}
\usepackage[margin=0.85in]{geometry}
\usepackage{amsmath,amssymb,amsthm}
\theoremstyle{plain}
\newtheorem{lemma}{Lemma}
\newtheorem{proposition}{Proposition}
\newtheorem{conjecture}{Conjecture}
\theoremstyle{remark}
\newtheorem{remark}{Remark}
\usepackage{graphicx}
\usepackage{booktabs}
\usepackage{caption}
\usepackage{subcaption}
\usepackage{placeins}
\usepackage{xcolor}
\usepackage{algorithm}
\usepackage{algpseudocode}
\usepackage[colorlinks=true,linkcolor=blue,citecolor=blue,urlcolor=blue]{hyperref}

\graphicspath{{figures/}{../results/}}

\newcommand{\KL}{\mathrm{KL}}

\newcommand{\xbar}{\bar{x}}
\newcommand{\ybar}{\bar{y}}
\newcommand{\softmax}{\mathrm{softmax}}

\title{\textbf{GARIP: A Running-Average Moving Reference for\\
Last-Iterate Self-Play in Two-Player Zero-Sum Games}}

\author{Can Savcı\\
Independent Researcher\\
İzmir, Türkiye\\
\texttt{savcisavcican@gmail.com}}
\date{}

\begin{document}
\maketitle

\begin{abstract}
Self-play with naive gradient ascent cycles in two-player zero-sum games: the last
iterate orbits the equilibrium. Modern methods restore last-iterate convergence by
regularizing toward a \emph{reference} policy --- MMD a \emph{fixed} one (reaching
only the regularized equilibrium), R-NaD a \emph{periodic snapshot} (the engine of
DeepNash). We study \textbf{GARIP}, which anchors to the \emph{running average}, and
isolate what the choice of reference controls. Our central result is a mechanism:
collapse tracks the \emph{peak} lag of the reference, and among causal convex averages
of a fixed mean lag the running average (flat profile, peak $=$ mean) uniquely
\emph{minimizes} that peak, while a snapshot's sawtooth has peak $=2\times$ mean (a
one-line theorem). Two consequences follow. \textbf{Convergence:} we prove
\emph{local} last-iterate convergence at constant anchor strength --- the anchor
scales the base map's rotation by $1-\beta$, crossing the stability boundary and
turning a recurrent base into a contraction (global convergence is conjectured at small
$\beta$; we characterize a large-$\beta$ consensus failure).
\textbf{Robustness:} GARIP matches R-NaD's peak performance --- on matrix games, the
Coin Game, and the board games Connect Four/Othello both moving references are far
more robust than fixed-magnet and magnet-free baselines --- but is the better
\emph{hyperparameter default}; we report it both ways: over the full grid collapse
rates are statistically indistinguishable, yet at conventional parameterizations a
matched-mean-lag setting collapses in $0/40$ vs $10/40$ seeds (a snapshot matches it
only by knowing to shorten $K$). The boundaries: an \emph{anticipatory}
(negative-weight) reference does better still on the stale side, and the advantage
appears only where naive self-play cycles (five deep self-play loops). All experiments
are pure JAX and reproducible.
\end{abstract}

\section{Introduction}

In a two-player zero-sum game with payoff matrix $A$, the row player maximizes
and the column player minimizes $V(x,y)=x^\top A y$. The unique solution
concept is the Nash equilibrium, and the natural learning rule --- each player
ascending/descending its own payoff --- is \emph{self-play gradient ascent}
(SGA). It is a classical fact that SGA cycles: the trajectory orbits the
equilibrium on a closed conservative field and the last iterate never
converges, even though the \emph{time average} does~\cite{golowich2020last}. For
modern deep multi-agent RL, where one keeps the last network and not a time
average over parameters, last-iterate behavior is what matters. The optimization
literature restores last-iterate convergence with \emph{extragradient}
\cite{korpelevich1976} and \emph{optimistic} / past-gradient updates
\cite{popov1980,rakhlin2013optimistic,daskalakis2018optimistic,mertikopoulos2019omd},
and more recently with \emph{Halpern anchoring} toward a reference point for
accelerated last-iterate rates \cite{halpern1967,yoon2021accelerated,cai2022finite}.
GARIP's update is an optimistic step composed with a Halpern anchor whose anchor
\emph{moves} (the running average) rather than being fixed.

A now-standard fix is to add a regularizer that pulls the policy toward a
\emph{reference} (or ``magnet'') policy, breaking the conservative rotation
with a contractive potential. The methods differ in \emph{which} reference they
use and \emph{how it moves}:
\begin{itemize}
  \item \textbf{MMD}~\cite{sokota2023mmd}: a \emph{fixed} reference (e.g.\
  uniform). Converges to the regularized (QRE) equilibrium; reaching true Nash
  needs the regularization annealed to zero.
  \item \textbf{R-NaD}~\cite{perolat2022rnad}: a \emph{periodic snapshot}
  reference, reset to the current policy every $K$ steps. Reaches Nash
  \emph{without} annealing and powers DeepNash on Stratego.
  \item \textbf{NFSP / fictitious play}~\cite{heinrich2016nfsp}: best-respond to
  the average \emph{opponent}; no self-anchor.
\end{itemize}

We examine a fourth point in this design space: anchor to the policy's own
\emph{running average}. We call the resulting method \textbf{GARIP}
(Generative Adversarial Reciprocal Iterative Play); it grew out of a
cycle-consistency reading of zero-sum play (Section~\ref{sec:method}). The
running average is a moving reference, like R-NaD's snapshot, so it too reaches
Nash without annealing --- but it is updated \emph{continuously} rather than in
discrete jumps.

\paragraph{The contribution, made precise.} A moving reference that does not
anneal is R-NaD's idea, and GARIP matches R-NaD's peak performance --- we claim no
more on that axis: at each method's tuned best the two tie across matrix games and
deep RL. GARIP's advantage is hyperparameter robustness, and it is \emph{structural},
not a tuning accident. Held for $K$ steps, R-NaD's snapshot has a sawtooth lag
whose peak ($K$) is twice its mean ($K/2$); collapse is driven by that peak. A
running average has a \emph{flat} lag profile, and we prove (Prop.~\ref{prop:peakmin})
that a flat profile \emph{uniquely minimizes} the peak lag among \emph{causal}
averages at fixed mean lag --- so the running average is the \emph{collapse-optimal
causal} reference shape, and any sawtooth is strictly dominated. Since R-NaD is
parameterized by reset frequency (a mean-like quantity), conventional $K$
under-counts its collapse-driving peak by $2\times$: $\rho\!=\!10^{-2}$ and
$K\!=\!200$ share a mean lag yet collapse $0\%$ versus $25\%$ of seeds
(Sec.~\ref{sec:theory}). On the convergence axis we prove a \emph{local} last-iterate
theorem (Prop.~\ref{prop:localli}): the anchor scales the base map's rotation by
$1-\beta$, crossing the stability boundary so a recurrent base becomes a contraction
--- the tabular form of the same anti-cycling mechanism; the global version is
conjectured at small $\beta$, with a characterized large-$\beta$ consensus failure. The
thesis:
\begin{quote}\itshape
A running-average reference matches R-NaD's peak performance and is the better
\emph{default} for non-annealed self-play: under an empirical collapse-tracks-peak
law it is the peak-minimizing reference \emph{among causal convex averages}, so its
conventional rate is collapse-free where the snapshot's conventional reset period is
not --- which the snapshot recovers only by knowing to shorten $K$. The boundaries:
the aggregate gap is a tie, an anticipatory (negative-weight) reference does better
on the stale side, and the edge appears only where naive self-play cycles.
\end{quote}

\section{The GARIP method}
\label{sec:method}

\paragraph{Cycle-consistency view.} A zero-sum game is the adversarial half of
a CycleGAN: the two players are opposed ``generators'' $G,F$ given by smoothed
best-response maps,
\[
G(x)=\softmax\!\big(\!-(x^\top A)/\tau\big),\quad
F(y)=\softmax\!\big((Ay)/\tau\big),
\]
and the equilibrium is the fixed point of the round trip $F\!\circ\!G$.
``Be a best response to your average'' is exactly cycle-consistency
$F(G(\bar{x}))\approx\bar{x}$. GARIP enforces it by anchoring each update toward
the running average, which is the self-consistent reference.

We are candid that this is \emph{motivation}, not mechanism. The cycle residual
$r_{\text{cyc}}$ (Alg.~\ref{alg:garip}, line~5) is computed only as a convergence
diagnostic and never enters the update --- mechanically, GARIP is optimistic
mirror ascent \cite{daskalakis2018optimistic} composed with a moving Halpern
anchor \cite{halpern1967,yoon2021accelerated}, and it does \emph{not} optimize a
cycle-consistency loss. The CycleGAN lens explains \emph{why} the running average
is the right anchor (it is the $F\!\circ\!G$ fixed point); the name is a
historical artifact of the method's lineage and not a claim that cycle-consistency
is load-bearing.

\paragraph{Matrix-game update.} Maintain iterates $x,y$ and running averages
$\xbar,\ybar$. Each step takes an \emph{optimistic} entropic ascent on the game
value, then a constant-strength Halpern step toward the moving anchor
(Algorithm~\ref{alg:garip}).

\begin{algorithm}[t]
\caption{GARIP (matrix-game form, row player; column $y$ symmetric)}
\label{alg:garip}
\begin{algorithmic}[1]
\State \textbf{Input:} payoff $A$, step $\eta$, anchor weight $\beta$, temperature $\tau$
\State \textbf{Init:} $x \gets$ uniform, $\xbar \gets x$, $g^{\text{prev}} \gets 0$
\For{$t = 1, 2, \dots, T$}
  \Statex \quad$\triangleright$ \emph{CycleGAN generators (smoothed best responses):}
  \State $G(x) \gets \softmax\!\big(\!-(x^\top A)/\tau\big)$ \Comment{opponent's BR to $x$}
  \State $\hat{x} \gets F(G(x)) = \softmax\!\big((A\,G(x))/\tau\big)$ \Comment{cycle round-trip $F\!\circ\!G$}
  \State $r_{\text{cyc}} \gets \lVert \hat{x} - x\rVert$ \Comment{$r_{\text{cyc}}\!=\!0 \Leftrightarrow$ equilibrium}
  \Statex \quad$\triangleright$ \emph{Optimistic ascent, then anchor to the self-consistent reference:}
  \State $g_x \gets A y$ \Comment{game gradient (row maximizes)}
  \State $x_{1/2} \gets \softmax\!\big(\log x + \eta(2 g_x - g^{\text{prev}})\big)$ \Comment{optimistic step}
  \State $x \gets (1-\beta)\, x_{1/2} + \beta\, \xbar$ \Comment{Halpern step, \emph{moving} anchor}
  \State $\xbar \gets \xbar + \tfrac{1}{t+1}(x - \xbar)$ \Comment{$\xbar$: fixed point of $F\!\circ\!G$}
  \State $g^{\text{prev}} \gets g_x$
\EndFor
\State \textbf{return} $x$ (and $\xbar$)
\end{algorithmic}
\end{algorithm}

($y$ symmetric, with descent.) Crucially \textbf{$\beta$ is constant} --- there
is \emph{no} annealing. Setting $\beta=0$ recovers optimistic mirror ascent;
replacing $\xbar$ by a periodic snapshot recovers R-NaD; replacing it by a fixed
distribution recovers MMD. This makes the four methods a clean ablation over
\emph{which reference, moving how}.

\paragraph{Deep-RL form.} In function-approximation settings the same idea is
PPO self-play with a $\lambda\,\KL(\pi_\theta \,\|\, \pi_{\text{mag}})$ term,
where the magnet $\pi_{\text{mag}}$ is the running-average (Polyak) policy and
the opponent is the average opponent. MMD uses a fixed magnet; R-NaD uses a
periodic-snapshot magnet reset every $K$ updates. The five methods span the
\emph{opponent $\times$ magnet} grid; note that GARIP (average opponent, moving
magnet) and R-NaD (current opponent, snapshot magnet) differ on \emph{both} axes,
so their head-to-head gap is not a clean single-variable isolation of the
reference. We therefore fill the full $2\times2$ grid at the standard config
($\lambda\!=\!0.5$, $K\!=\!200$, $\rho\!=\!10^{-2}$, 10 seeds; exploit return,
lower is more robust): GARIP (avg+moving) $-11.3$, R-NaD (current+snapshot)
$-11.1$, average+snapshot $-12.3$, current+moving $-11.9$, all with $0\%$
collapse and std $\approx\!1.5$. \emph{Every} single-axis effect (changing the
magnet, or changing the opponent) is below $1.3$ --- smaller than the seed noise.
So at default settings neither the magnet nor the opponent axis matters: the
GARIP--R-NaD tie is not driven by either variable. The reference choice only
becomes decisive in the \emph{staleness} regime (Sec.~\ref{sec:exp}), which is
the entire content of the robustness claim.

\section{Staleness: the failure mode of every moving reference}
\label{sec:theory}

\paragraph{Two averagers.} GARIP realizes the running average two ways, and the
distinction is load-bearing. In the matrix games (Alg.~\ref{alg:garip}, line~10)
it is the \emph{cumulative} mean, weight $1/(t{+}1)$ on the newest iterate; in
the deep-RL setting it is a fixed-rate \emph{exponential moving average} (Polyak),
$\bar\theta \leftarrow (1{-}\rho)\bar\theta + \rho\,\theta$. These have different
lag profiles, and the $D/\rho$ bound below governs only the EMA.

\paragraph{One mechanism, three references.} Let $D$ be the per-step policy
drift and let a reference lag the current policy by $\ell$ steps. Under KL weight
$\lambda$ it exerts a restoring force $\sim\!\lambda\,\ell\,D$ toward an outdated
target; once this overwhelms the game gradient the policy is dragged to the stale
reference and \emph{collapses to an exploitable policy}. The references differ
only in $\ell$:
\begin{itemize}
  \item \textbf{periodic snapshot} (R-NaD): held fixed for $K$ steps,
  $\ell=\Theta(K)$, stale force $\sim\!\lambda K$;
  \item \textbf{fixed-rate EMA} (deep-RL GARIP): $\ell=\Theta(1/\rho)$, stale
  force $\sim\!\lambda/\rho$;
  \item \textbf{cumulative mean} (matrix GARIP): $\ell=\Theta(t)$ grows without
  bound, \emph{but} $D$ vanishes as the iterates approach equilibrium, so the
  product stays bounded and the mean catches up --- which is why the matrix form
  has no collapse region.
\end{itemize}
Consequently \textbf{the EMA used in deep RL has its own staleness axis.} A slow average ($\rho\!\to\!0$) at non-trivial $\lambda$ is exactly as
stale as a long snapshot --- the correspondence $\lambda K \leftrightarrow
\lambda/\rho$ is symmetric. So a running average does \emph{not} ``cannot go
stale'': our own sweep (Sec.~\ref{sec:exp}) finds GARIP collapses at small $\rho$
($\rho\!\approx\!10^{-3}$) for $\lambda\!\ge\!0.5$, just as R-NaD collapses at
large $K$.

\paragraph{The corrected prediction.} This gives a sharper, falsifiable claim
than ``GARIP never collapses'': \emph{each method degrades monotonically in its
own staleness product} ($\lambda K$ for R-NaD, $\lambda/\rho$ for EMA-GARIP), and
the practical question is which method's \emph{commonly used} hyperparameter
range sits farther from its collapse boundary. R-NaD's default reset period is
$O(10^2\text{--}10^3)$, inside the regime our sweep finds dangerous; GARIP's
default $\rho=10^{-2}$ is an order of magnitude from its boundary
$\rho\!\approx\!10^{-3}$. Section~\ref{sec:exp} measures both boundaries on a
matched grid (GARIP's $\rho$ swept down to $1.25\!\times\!10^{-3}$, symmetric to
R-NaD's $K\!=\!800$) and reports the collapse rates with Wilson confidence
intervals. The prediction holds: GARIP's collapse boundary is at
$\rho\!\approx\!0.005$ (collapse-free for $\rho\!\ge\!0.01$), R-NaD's at
$K\!\approx\!100$--$200$ (collapsing for the larger $K$ used at scale) ---
symmetric mechanisms, but GARIP's boundary sits past its default rate and
R-NaD's inside common reset periods (Sec.~\ref{sec:exp}). A fuller treatment is
in the supplementary \texttt{staleness\_analysis.md}.

\paragraph{Why it is structural: the running average is the optimal reference
shape.} Model a moving reference as a convex combination of past iterates,
$r_t=\sum_{s\le t}w_{t,s}\theta_s$ ($w_{t,s}\!\ge\!0$, $\sum_s w_{t,s}\!=\!1$). Under
local drift $\lVert\theta_{t+1}-\theta_t\rVert\!\approx\!D$, its lag is
$\ell_t=\lVert\theta_t-r_t\rVert/D=\sum_s w_{t,s}(t-s)$, the weighted mean age, and
it exerts a stale force $\sim\!\lambda\,\ell_t\,D$. The load-bearing physics is
\emph{empirical}: collapse tracks the \emph{peak} force --- the GARIP and R-NaD
\emph{peak-lag} curves overlay (Fig.~\ref{fig:lag}), so a flat reference and a
snapshot peaking at the same lag collapse identically and the reset
\emph{discontinuity} contributes nothing; only the peak magnitude matters. Granting
that empirical law, which reference \emph{shape} minimizes the peak is a one-line
theorem.

\begin{proposition}[The running average minimizes peak lag among causal averages]
\label{prop:peakmin}
Among all \emph{causal convex} references $r_t=\sum_{s\le t}w_{t,s}\theta_s$
($w_{t,s}\!\ge\!0,\ \sum_s w_{t,s}\!=\!1$) with a given time-averaged lag $\bar\ell$,
the peak lag obeys $\ell_{\max}\ge\bar\ell$, with equality iff the lag profile
$\{\ell_t\}$ is constant in $t$. Hence, at comparable per-step drift $D$ (so the
collapse-safe weight scales as $\lambda^\star\!\propto\!1/(\ell_{\max}D)$), the
constant-lag profile is the unique safest \emph{shape} at matched mean lag. The
running average realizes it --- the EMA's lag is stationary at
$(1-\rho)/\rho\!\approx\!1/\rho$ --- whereas the periodic snapshot is a sawtooth with
$\ell_{\max}=2\bar\ell$, strictly dominated by a factor $2$.
\end{proposition}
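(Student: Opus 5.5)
The core inequality is just that a maximum dominates an average. The rest is bookkeeping for the two concrete references. I will fix a finite horizon (or a period, for the stationary/periodic cases) $t\in\{1,\dots,T\}$. I will read the time-averaged lag as $\bar\ell=\frac1T\sum_t\ell_t$ and the peak as $\ell_{\max}=\max_t\ell_t$.

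\emph{Step 1 (the bound).} Since $\ell_t\le\ell_{\max}$ for every $t$, averaging gives $\bar\ell\le\ell_{\max}$. Equality forces $\sum_t(\ell_{\max}-\ell_t)=0$, a sum of nonnegative terms, so every $\ell_t=\ell_{\max}$. The converse is immediate. This gives the ``iff constant'' clause, and it uses nothing about causality or convexity beyond $\ell_t\ge0$. Causality and convexity only guarantee that $\ell_t=\sum_s w_{t,s}(t-s)$ is a well-defined nonnegative weighted mean age. I will note that under the drift model $\ell_t$ depends only on the weights, so the problem really is about lag profiles.

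\emph{Step 2 (safety corollary).} I will take the stated scaling $\lambda^\star\propto 1/(\ell_{\max}D)$ as given, which is the empirical collapse-tracks-peak law. Then at fixed $D$ and fixed $\bar\ell$, maximizing $\lambda^\star$ is the same as minimizing $\ell_{\max}$. By Step~1 the minimum value $\bar\ell$ is attained exactly by constant profiles, so they are the unique safest shape.

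\emph{Step 3 (the two realizations).} For the EMA, unrolling $r_t=(1-\rho)r_{t-1}+\rho\theta_t$ gives weights $w_{t,s}=\rho(1-\rho)^{t-s}$ in the stationary regime. Then $\ell_t=\sum_{k\ge0}k\rho(1-\rho)^k=(1-\rho)/\rho$, which is independent of $t$ and hence flat. For the snapshot reset every $K$ steps, the reference is $\theta$ at the last reset, so the lag runs $0,1,\dots,K-1$ within each period and then drops back. That gives $\bar\ell=(K-1)/2$ and $\ell_{\max}=K-1$, so $\ell_{\max}=2\bar\ell$ exactly in discrete time. If the paper's $K$ versus $K/2$ convention is used (continuous-time sawtooth), the result is $K$ and $K/2$. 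Either way the factor is $2$.

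The only real obstacle is definitional rather than mathematical. I need to fix what ``time-averaged lag'' means: a finite horizon, a periodic average, or a Ces\`aro limit with $\sup$ in place of $\max$. I also need to make the EMA's initial transient harmless. I will handle this by stating the result over one period, or in the stationary regime, and by remarking that with a Ces\`aro average and $\ell_{\max}=\sup_t\ell_t$ the inequality $\bar\ell\le\ell_{\max}$ still holds. In that limiting setting equality holds iff the lag equals its supremum except on a density-zero set of times, rather than at every time.
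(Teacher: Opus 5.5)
Your proof is correct and is essentially the paper's argument: $\max_t\ell_t\ge T^{-1}\sum_t\ell_t$ with equality iff the profile is constant, the stationary EMA weights $\rho(1-\rho)^{t-s}$ give mean age $(1-\rho)/\rho$, and the snapshot lags $0,\dots,K-1$ give $\bar\ell=(K-1)/2$ and $\ell_{\max}=K-1=2\bar\ell$. One correction to your Ces\`aro side remark: the ``only if'' direction of your characterization fails, since equality $\bar\ell=\sup_t\ell_t$ holds iff $\ell_t$ converges to the supremum in density (for every $\varepsilon>0$ the set $\{t:\ell_t<\ell_{\max}-\varepsilon\}$ has density zero), which is weaker than equalling it off a density-zero set. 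The EMA started at $\theta_0$ is a counterexample: it has $\ell_t=\frac{1-\rho}{\rho}\bigl(1-(1-\rho)^t\bigr)$, which never reaches $(1-\rho)/\rho$ yet has it as its Ces\`aro mean.
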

\noindent\emph{(Proof in App.~\ref{app:proofs}.)}

\begin{remark}[Scope: non-anticipatory averages]
\label{rem:causal}
Optimality is \emph{within} causal convex averages. Extrapolating references
(negative/Anderson-style weights) admit $\ell\!<\!0$ and can lower the peak force
further --- and indeed an \emph{anticipatory} double-EMA magnet measurably widens
the safe basin on the stale side (App.~\ref{sec:extrap}), confirming the running
average is optimal only in the causal \emph{convex} class (the double EMA is still
causal --- only past iterates --- but not a convex combination). It is not free: the
same anticipation overshoots at fast averaging rates, so it complements rather than
dominates GARIP.
\end{remark}

\noindent So, granting the empirical peak-force law, the running average is the
peak-minimizing reference shape among causal averages, and a sawtooth's peak is
twice its mean. Whether this matters depends on parameterization. R-NaD is
parameterized by reset frequency $K$, a mean-like quantity, so a $K$ chosen to feel
``moderately stale'' carries a collapse-driving peak twice as large; the EMA's peak
equals its mean. At matched \emph{peak} lag the two tie --- one could set
$K\!\approx\!100$ and recover safety --- so the gap is a property of the conventional
parameterization, not of the methods in the abstract. Concretely, at the standard
$\rho\!=\!10^{-2}$ and $K\!=\!200$ (same mean lag $100$), GARIP collapses in $0/40$
seeds (including the stress $\lambda\!=\!2$) versus R-NaD's $10/40$. The lone outlier in
Fig.~\ref{fig:lag}, an extreme $\rho\!=\!1.25\!\times\!10^{-3}$, is a never-used EMA
so slow its \emph{permanent} lag exceeds the snapshot's transient peak.

\begin{figure}[tb]
\centering
\includegraphics[width=0.95\linewidth]{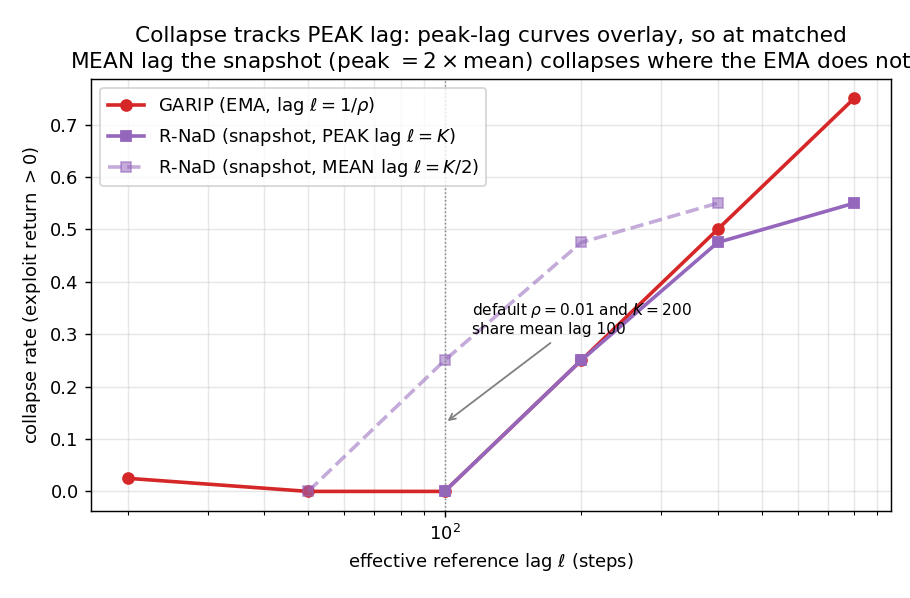}
\caption{Collapse rate vs.\ effective reference lag (deep-RL sweep, 10 seeds). The
EMA's lag is flat ($1/\rho$); the snapshot's is a sawtooth (peak $K$, mean $K/2$).
Each point averages the \emph{same} four $\lambda$, so the overlay is not a
$\lambda$ artifact. The GARIP and R-NaD \emph{peak}-lag curves overlay at the
matched points ($\ell\!=\!100,200,400$) --- collapse is peak-driven, and the reset
discontinuity adds nothing --- so at matched \emph{mean} lag R-NaD (dashed)
collapses earlier. $\rho\!=\!10^{-2}$ and $K\!=\!200$ share mean lag $100$ yet
collapse $0\%$ vs $25\%$.}
\label{fig:lag}
\end{figure}

\section{Tabular dynamics: fixed points and the unbiased anchor}
\label{sec:analysis}

The robustness theorem (Prop.~\ref{prop:peakmin}) is the load-bearing theory; this
section adds the matrix-game dynamics that complete the picture --- why GARIP
reaches Nash \emph{without annealing}. On the convergence axis it gives a
\emph{local} last-iterate theorem (Prop.~\ref{prop:localli}: the anchor crosses the
stability boundary) and leaves only the \emph{global} statement conjectured; we are
explicit about that boundary.
On a matrix game $A$ the row plays $x\in\Delta_m$ against $y\in\Delta_n$ for value
$x^\top\! A y$. GARIP's row update (column symmetric), with step $\eta$ and constant
anchor $\beta\in(0,1)$, is
\begin{align*}
g_t &= A y_t,\\
x_{t+\frac12} &= \softmax\!\big(\log x_t + \eta(2g_t - g_{t-1})\big),\\
x_{t+1} &= (1-\beta)\,x_{t+\frac12} + \beta\,\bar x_t,\\
\bar x_{t+1} &= \bar x_t + \tfrac{1}{t+2}\big(x_{t+1}-\bar x_t\big).
\end{align*}
The half-step is optimistic multiplicative weights (OMWU); the anchor is a Halpern
step toward the running average $\bar x_t$. Two facts are exact, and together they
explain why GARIP reaches Nash without annealing.

\begin{lemma}[The anchor contracts deviation from the average]
\label{lem:contract}
For every $t$, $\;x_{t+1}-\bar x_t=(1-\beta)\,(x_{t+\frac12}-\bar x_t)$, hence
$\lVert x_{t+1}-\bar x_t\rVert = (1-\beta)\,\lVert x_{t+\frac12}-\bar x_t\rVert$.
\end{lemma}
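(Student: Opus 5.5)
The identity follows directly from the anchor line of the update, $x_{t+1}=(1-\beta)\,x_{t+\frac12}+\beta\,\bar x_t$, by subtracting $\bar x_t$ from both sides. I will write $\bar x_t$ as $(1-\beta)\bar x_t+\beta\bar x_t$, which holds because the two coefficients sum to one. Then
\[
x_{t+1}-\bar x_t=(1-\beta)\,x_{t+\frac12}+\beta\,\bar x_t-(1-\beta)\,\bar x_t-\beta\,\bar x_t=(1-\beta)\,\bigl(x_{t+\frac12}-\bar x_t\bigr).
\]
This is an exact vector identity in $\mathbb{R}^m$. It holds for every $t$ and for any value of $x_{t+\frac12}$. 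So it does not depend on the optimistic step, the softmax, or the form of the averaging rule for $\bar x_{t+1}$. The only thing used is that $\bar x_t$ is the reference at the moment the anchor is applied. This is why I would index the anchor by $\bar x_t$ and not $\bar x_{t+1}$, exactly as in the displayed update.

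For the norm statement, I would take any norm $\lVert\cdot\rVert$ on $\mathbb{R}^m$ and apply absolute homogeneity, $\lVert c v\rVert=|c|\,\lVert v\rVert$. Since $\beta\in(0,1)$, we have $|1-\beta|=1-\beta$, which gives the equality of norms. I would remark that the result is norm-agnostic; for instance it holds in $\ell_1$, the natural norm on the simplex.

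There is no genuine obstacle here. The one point to state explicitly is well-posedness: $x_{t+\frac12}$ and $\bar x_t$ both lie in $\Delta_m$, so $x_{t+1}$ is a convex combination of the two and also lies in $\Delta_m$. This holds because $\bar x_t$ is itself a convex combination of earlier iterates, by induction using the cumulative-mean update. As a result the identity is a statement about points of the simplex, and the contraction factor $1-\beta<1$ is exact rather than only a bound.
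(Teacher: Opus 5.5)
Your proof is correct and is essentially the paper's argument: subtract $\bar x_t$ from the anchor line $x_{t+1}=(1-\beta)\,x_{t+\frac12}+\beta\,\bar x_t$ to get the vector identity, then apply absolute homogeneity of the norm using $1-\beta>0$. Your extra remarks, that the identity holds in any norm and that $x_{t+1}$ stays in $\Delta_m$, are accurate but not needed for the lemma.
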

\noindent The anchor shrinks the gap to the \emph{instantaneous} anchor $\bar x_t$
by exactly $1-\beta$ each step (the running average itself then moves by
$\bar x_{t+1}-\bar x_t=O(1/t)$, which the identity does not net out --- contraction
is toward the current anchor, not a fixed point). This is the oscillation damping of
Sec.~\ref{sec:theory} made exact, and it carries \emph{no} fixed-point bias:

\begin{proposition}[Fixed points are Nash; the anchor is unbiased]
\label{prop:fp}
(i) Every Nash equilibrium of $A$ is a fixed point of GARIP, with the anchor
inactive there. (ii) Conversely, every \emph{interior} fixed point is a Nash
equilibrium. (iii) Unlike MMD, whose \emph{fixed} magnet $m$ exerts a nonzero pull
at Nash --- making its fixed point the magnet-regularized (quantal-response)
equilibrium, $O(\eta)$ off Nash --- GARIP's self-referential anchor vanishes at any
fixed point, so no annealing is needed to remove a magnet bias.
\end{proposition}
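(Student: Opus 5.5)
The plan is to make ``fixed point'' precise first, and then read off all three claims from the update equations. I would take the state of the dynamics to be the tuple $(x_t,y_t,\bar x_t,\bar y_t,g_{t-1},h_{t-1})$, where $h$ is the column player's gradient $x^\top A$. A fixed point is a state that the update maps to itself: $x_{t+1}=x_t$, $\bar x_{t+1}=\bar x_t$, and $g_{t-1}=g_t$ (likewise for the column player). The step size $1/(t+2)$ depends on $t$, but this does no harm. The equation $\bar x_{t+1}=\bar x_t$ holds iff $x_{t+1}=\bar x_t$, whatever the value of $t$. Combined with $x_{t+1}=x_t$, this gives the key structural fact: at any fixed point $\bar x=x$.

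Next I would feed $\bar x=x$ into the Halpern step. The equation $x=(1-\beta)x_{t+\frac12}+\beta x$ with $\beta<1$ forces $x_{t+\frac12}=x$. Equivalently, Lemma~\ref{lem:contract} gives $0=\lVert x_{t+1}-\bar x_t\rVert=(1-\beta)\lVert x_{t+\frac12}-\bar x_t\rVert$. So at any fixed point the anchor term $\beta(\bar x-x_{t+\frac12})$ vanishes identically, which is claim (iii) for GARIP. The fixed-point conditions then reduce to those of OMWU with a stationary gradient. Since $2g_t-g_{t-1}=g$, the remaining condition is $\softmax(\log x+\eta A y)=x$, and symmetrically $\softmax(\log y-\eta A^\top x)=y$.

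For (ii), suppose $x$ and $y$ are interior. Then $\log x$ is finite and the softmax condition says $\eta Ay$ equals the log-normalizer in every coordinate, i.e.\ $Ay=c\mathbf 1$. Likewise $x^\top A=c'\mathbf 1$, and taking inner products shows $c=c'=x^\top Ay$. Each player is then indifferent over all pure strategies, so each is a best response to the other, and $(x,y)$ is Nash. For (i), take a Nash pair $(x^*,y^*)$ with value $v$. Then $(Ay^*)_i=v$ on $\mathrm{supp}(x^*)$. Off the support, the convention $\log 0=-\infty$ (i.e.\ multiplicative weights) keeps those coordinates at zero. On the support the shift by $\eta v$ is constant, so the softmax returns $x^*$. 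Hence the state $(x^*,y^*,x^*,y^*,Ay^*,x^{*\top}A)$ is fixed, and the anchor is inactive there. I would also note why (ii) is stated only for interior points: boundary points such as pure-strategy profiles are fixed by the multiplicative update without being Nash, so the converse genuinely needs interiority.

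For the MMD contrast in (iii), I would write MMD's step with a fixed magnet $m$ as a mirror step whose first-order condition carries an extra term proportional to $\log x-\log m$. At a Nash point with $x^*\neq m$ this term is nonzero, so Nash is not stationary. The stationary point instead solves $x\propto m^{\alpha}\exp(\cdots Ay)$, a quantal-response equilibrium whose distance from Nash vanishes only as the regularization is annealed. I expect the main obstacle to be definitional rather than computational: justifying the fixed-point notion for a non-autonomous averager, and handling $\log 0$ on the boundary. The first is resolved by the observation that $\bar x_{t+1}=\bar x_t\iff x_{t+1}=\bar x_t$ holds for every $t$. The second is resolved by restricting (ii) to interior points and using the support-preserving convention in (i).
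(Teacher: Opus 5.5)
Your proposal is correct and follows essentially the same route as the paper's proof: you use Lemma~\ref{lem:contract} with $\bar x=x$ to force $x_{t+\frac12}=x$, reduce to OMWU stationarity with $2g_t-g_{t-1}=Ay$, and read off indifference on the support, which gives Nash for interior fixed points and fixed points at any Nash; you then observe that the anchor term $\beta(\bar x_t-x_{t+\frac12})$ vanishes. You add rigor the paper skips: you derive $\bar x=x$ from the averaging step rather than stipulating it, explain why (ii) needs interiority, and sketch the MMD contrast the paper only asserts. In that sketch the stationary condition for KL weight $\alpha$ should read $x\propto m\exp(Ay/\alpha)$, so the exponent on $m$ is one, not $\alpha$.
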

\noindent\emph{(Proof in App.~\ref{app:proofs}.)}

\begin{remark}[The $\beta\!\to\!0$ core]
\label{rem:omwu}
At $\beta=0$ GARIP is exactly OMWU, which converges to Nash in the \emph{last
iterate} at a linear rate in bilinear games \cite{wei2021linear,daskalakis2018optimistic}.
The anchor is the only addition.
\end{remark}

We can now prove last-iterate convergence \emph{locally} --- and, more importantly,
show analytically that the anchor is what crosses the stability boundary.

\begin{proposition}[Local last-iterate convergence; the anchor crosses the
stability boundary]
\label{prop:localli}
Let $x^\star$ be an interior Nash of an $m\times n$ zero-sum game. The base half-step
linearizes there into independent modes; let $M=\mu\,e^{i\omega}$ be a mode (an
eigenvalue of the base operator). For the un-anchored base ($\beta\!=\!0$),
non-optimistic MWU is a rotation with $\mu\ge1$ --- its last iterate recurs
(orbits/spirals out) and does \emph{not} converge --- while optimism gives $\mu<1$.
Because the anchor and the running average act \emph{identically on every coordinate}
(scalar operators), they commute with the base operator; in the basis diagonalizing
it the full linearized GARIP map block-diagonalizes, per mode, into
\[
\begin{gathered}
J(M)=\begin{pmatrix}(1-\beta)M & \beta\\[1pt] \rho(1-\beta)M & 1-\rho(1-\beta)\end{pmatrix},\\[3pt]
\det J=(1-\rho)(1-\beta)M,
\end{gathered}
\]
so $\rho(\text{full})=\max_k\rho\big(J(M_k)\big)$ over the modes $M_k$. Each block has a
\emph{fast} policy eigenvalue of modulus $(1-\beta)\mu$ and a \emph{slow} average
eigenvalue near $1$; for $\beta\in(0,1)$, $\rho\in(0,1]$ and small $\eta$ both lie
strictly inside the unit disk, so $x^\star$ is locally asymptotically stable in the
last iterate --- with \emph{no annealing}. Crucially the anchor scales each base mode
by exactly $1-\beta$ (Lemma~\ref{lem:contract}): it turns a recurrent rotation
($\mu\!=\!1\Rightarrow(1-\beta)\mu<1$) into a contraction, so it \emph{induces}
last-iterate convergence for the non-optimistic base --- the tabular form of the
deep-RL collapse-prevention mechanism (\S\ref{sec:theory}).
\end{proposition}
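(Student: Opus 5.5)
The plan is to linearize the GARIP map at the interior Nash point, reduce it to $2\times2$ blocks, and then read off the spectrum of each block. The average update is written with a generic rate $\rho$: the EMA uses constant $\rho$, and the cumulative mean has $\rho_t=1/(t+2)$, which I treat as frozen for the local analysis and revisit at the end. Work in the state $(z_t,\bar z_t)$, where $z$ stacks the deviations $x-x^\star$ and $y-y^\star$, and the optimistic memory $g_{t-1}$ is absorbed into an augmented base state. Near the interior point the softmax half-step is smooth, and its linearization is a fixed linear operator $B$ acting on tangent directions of the simplex product (sum-zero deviations). This $B$ is the base operator whose eigenvalues are $M=\mu e^{i\omega}$.

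From Alg.~\ref{alg:garip}, the linearized step is $z_{t+1}=(1-\beta)Bz_t+\beta\bar z_t$ and $\bar z_{t+1}=(1-\rho)\bar z_t+\rho z_{t+1}$. Substituting the first equation into the second gives the combined map
\[
\begin{pmatrix} z\\ \bar z\end{pmatrix}\mapsto
\begin{pmatrix}(1-\beta)B & \beta I\\ \rho(1-\beta)B & (1-\rho+\rho\beta) I\end{pmatrix}
\begin{pmatrix} z\\ \bar z\end{pmatrix}.
\]
Since $1-\rho+\rho\beta=1-\rho(1-\beta)$, the lower-right entry matches the statement. Every block is a polynomial in $B$ (scalar multiples of $I$ or of $B$), so all blocks commute. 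If $B$ is diagonalizable, or after passing to a Jordan/Schur basis where the same argument works triangularly, the $2n\times2n$ map is similar to $\mathrm{diag}_k J(M_k)$. Its spectral radius is therefore $\max_k\rho(J(M_k))$, and the determinant is $(1-\beta)M(1-\rho+\rho\beta)-\rho\beta(1-\beta)M=(1-\rho)(1-\beta)M$, as stated.

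Next, the modes of $B$. For plain MWU at an interior equilibrium, the linearization is $I+\eta K$ with $K$ skew-symmetric in the Fisher metric. Its eigenvalues are $1\pm i\eta\sigma$ with modulus at least $1$, which gives the recurrence claim. For the optimistic version, the augmented $2\times2$ companion per singular value $\sigma$ of $A$ has characteristic polynomial $\lambda^2-(1+2i\eta\sigma)\lambda+i\eta\sigma=0$. Expanding in $\eta$, the principal root is $1+i\eta\sigma-2\eta^2\sigma^2+O(\eta^3)$, whose modulus is $1-\tfrac32\eta^2\sigma^2+O(\eta^3)<1$. The spurious root is $O(\eta)$. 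The zero-singular directions need separate care; at an interior, generically unique Nash they should not occur in the tangent space.

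The main step is locating the eigenvalues of $J(M)$. The trace is $(1-\beta)M+1-\rho(1-\beta)$ and the determinant is $(1-\rho)(1-\beta)M$. With $|M|\le1$, first take the case where $M$ is near $1$ (small $\eta$). Perturb from $M=1$: there $J$ has eigenvalues $1$ (eigenvector $(1,1)$, which is the neutral direction) and $(1-\rho)(1-\beta)$. Because our deviations are restricted to the sum-zero tangent space, $M=1$ exactly does not occur. For $M=1+\epsilon$ with $\operatorname{Re}\epsilon<0$, or with $|M|=1$ and $\omega\neq0$, first-order perturbation gives the slow eigenvalue as $1+c\,\epsilon$ with $c=\beta\rho/(\beta+\rho(1-\beta))$-type weights, which is a positive real weight. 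So its modulus is strictly less than $1$, because the weighted average of $M$ and $1$ stays inside the disk when the weight on $1$ is strictly below full. The fast eigenvalue is then $\det J/\lambda_{\text{slow}}\approx(1-\rho)(1-\beta)M$, with modulus below $1$.

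I expect the main obstacle to be making this uniform: showing $|\lambda_{\text{slow}}|<1$ without relying on perturbation. For that I would use the Schur–Cohn criterion for the quadratic: $|\det J|<1$ together with $|\operatorname{tr}J-\overline{\operatorname{tr}J}\det J|<1-|\det J|^2$, and verify it for $|M|\le1$, $M\ne1$. Finally, local asymptotic stability follows from spectral radius less than $1$ by the standard linearization (Hartman–Grobman/Perron) theorem. For the cumulative-mean case with $\rho_t\to0$, the slow eigenvalue tends to $1$, so I would only claim stability for constant $\rho$ and treat $\rho_t$ via a slowly varying argument.
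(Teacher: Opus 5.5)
\textbf{The $|M|\le 1$ region.} Your block reduction and determinant are right. On $|M|\le1$, $M\neq1$, your uniform step does close. On $|\lambda|=1$ the eigen-relation is $M=\lambda(\lambda-a)/\bigl((1-\beta)(\lambda-b)\bigr)$, with $a=1-\rho(1-\beta)$ and $b=1-\rho$. Because $a>(1-\beta)^2b$, this forces $|M|>1$ unless $\lambda=1$. So as $M$ moves from $0$ through that region, no eigenvalue can reach the circle.

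\textbf{The gap: the non-optimistic base.} Your region excludes the case the proposition is about. The non-optimistic base, where the anchor is supposed to \emph{induce} convergence, has $M=1\pm i\eta\sigma$ with $|M|>1$.
\begin{itemize}
\item There your first-order slow eigenvalue $(1-c)+cM$ lies \emph{outside} the disk. The correct weight is $c=\rho(1-\beta)/s$ with $s=\beta+\rho(1-\beta)$.
\item The sign is settled only at second order. One has $\lambda_{\mathrm{slow}}=1+c\epsilon+d\epsilon^2+\dots$ with $d=\rho\beta(1-\beta)^2(1-\rho)/s^3$.
\item For $\epsilon=i\eta\sigma$ this gives $|\lambda_{\mathrm{slow}}|^2=1-(2d-c^2)\eta^2\sigma^2+O(\eta^3)$. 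For small $\eta$ the mode is therefore stable when $\rho\bigl(\beta+\rho(1-\beta)\bigr)<2\beta(1-\rho)$, and unstable when the inequality is reversed.
\end{itemize}
So no Schur--Cohn check uniform in $\rho\in(0,1]$ can work for this base. At $\rho=1$ the linearized map is literally MWU with step $(1-\beta)\eta$, which spirals out.

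\textbf{The missing idea.} It is the paper's choice of small parameter. Expand in $\rho$ about the $\rho=0$ factorization $(\lambda-1)\bigl(\lambda-(1-\beta)M\bigr)$. This gives $\lambda_{\mathrm{slow}}=1-\rho K+O(\rho^2)$ with $K=(1-\beta)(1-M)/\bigl(1-(1-\beta)M\bigr)$. The real part of $K$ is positive even at $M=1+i\omega$: it equals $(1-\beta)^2\omega^2/\bigl(\beta^2+(1-\beta)^2\omega^2\bigr)$. The small-$\rho$ regime is also where the Ces\`aro average $\rho_t=1/(t+2)$ lives. Your proposal leaves that case to an unspecified slowly-varying argument.

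\textbf{The augmented-state step.} Absorbing the optimistic memory into an augmented $B$ breaks the commutation you rely on. The anchor and the average act on the current-iterate slot but not on the memory slot. On the augmented space they are therefore not multiples of $I$, and the full map is not similar to $\mathrm{diag}_k J(M_k)$ over eigenvalues of the augmented companion.

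What does hold is this: writing the half-step as $z_{t+\frac12}=z_t+\eta L(2z_t-z_{t-1})$, every block of the full map is a polynomial in $L$. The exact reduction is then a $3\times3$ block per eigenvalue of $L$. The paper says exactly this and claims exactness of $J(M)$ only for the one-step base.

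As written, then, your rigorous region ($|M|\le1$, i.e.\ OMWU) is where $J(M)$ is only a model. The case where $J(M)$ is exact (MWU) is the one your argument does not reach.

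\textbf{A small slip.} The principal OMWU root is $1+i\eta\sigma-\eta^2\sigma^2+O(\eta^3)$, not $1+i\eta\sigma-2\eta^2\sigma^2$. Its modulus is $1-\tfrac12\eta^2\sigma^2+O(\eta^3)$. The conclusion that it lies inside the disk is unchanged.
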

\noindent\emph{(Proof in App.~\ref{app:proofs}: per-mode block-diagonalization ---
exact for the one-step base on any $m\times n$ game, confirmed numerically --- plus the
first-order slow-mode expansion, whose coefficient is complex because $M$ is a
rotation.)}

\noindent What \emph{stays} open is the global statement. The right potential-level
tool is exact:

\begin{lemma}[Potential-level anchor contraction]
\label{lem:klconv}
Write $\Phi_t=\KL(x^\star\Vert x_t)$, $\Phi_{t+\frac12}=\KL(x^\star\Vert x_{t+\frac12})$,
$\Psi_t=\KL(x^\star\Vert\bar x_t)$. Because $q\mapsto\KL(x^\star\Vert q)$ is convex,
\[
\Phi_{t+1}\le(1-\beta)\,\Phi_{t+\frac12}+\beta\,\Psi_t,
\qquad \Psi_t\le\tfrac1t\textstyle\sum_{s<t}\Phi_s .
\]
\end{lemma}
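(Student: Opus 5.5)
Both inequalities come from convexity of $q\mapsto\KL(x^\star\Vert q)$ on the simplex, so the plan is to establish that convexity once and then apply Jensen's inequality twice: once to the Halpern step and once to the running average. For convexity, write $\KL(x^\star\Vert q)=\sum_i x^\star_i\log x^\star_i-\sum_i x^\star_i\log q_i$. The first sum does not depend on $q$. Each $-\log q_i$ is convex on $q_i>0$ and the weights $x^\star_i$ are nonnegative, so $q\mapsto\KL(x^\star\Vert q)$ is convex on the relative interior of $\Delta_m$. If some $q_i=0$ while $x^\star_i>0$, the value is $+\infty$ and every inequality below holds trivially.

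\textbf{First inequality.} By the update, $x_{t+1}=(1-\beta)\,x_{t+\frac12}+\beta\,\bar x_t$, which is a convex combination of two simplex points because $\beta\in(0,1)$. Jensen's inequality with the two weights $(1-\beta,\beta)$ gives
\[
\Phi_{t+1}\le(1-\beta)\,\Phi_{t+\frac12}+\beta\,\Psi_t
\]
directly.

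\textbf{Second inequality.} I would unroll the recursion $\bar x_{t+1}=\bar x_t+\tfrac{1}{t+2}(x_{t+1}-\bar x_t)$ with $\bar x_0=x_0$. Induction gives $\bar x_t=\tfrac{1}{t+1}\sum_{s=0}^{t}x_s$. The inductive step is the identity
\[
\tfrac{t+1}{t+2}\cdot\tfrac{1}{t+1}\sum_{s\le t}x_s+\tfrac{1}{t+2}\,x_{t+1}=\tfrac{1}{t+2}\sum_{s\le t+1}x_s .
\]
Jensen's inequality with uniform weights then gives $\Psi_t\le\tfrac{1}{t+1}\sum_{s\le t}\Phi_s$.

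\textbf{Main obstacle: indexing.} The lemma writes $\tfrac1t\sum_{s<t}\Phi_s$, an average over $t$ terms. The unrolled cumulative mean in the displayed update is instead an average over $t+1$ terms. Algorithm~\ref{alg:garip} uses $1/(t+1)$ with $t$ starting at $1$, which points to a shifted convention. I would state the identification explicitly: under the algorithm's indexing, $\bar x$ after $t$ updates is the uniform mean of the $t$ iterates $x_0,\dots,x_{t-1}$ that it has absorbed. I would verify this by checking the base case directly and noting that the inductive step is the same identity as above. Once the uniform-average representation is pinned down, the bound follows from the same Jensen step.

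Neither inequality uses the game structure. The argument rests only on convexity and the averaging form, so no further ingredients are needed.
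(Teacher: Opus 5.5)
Your argument is the paper's: convexity of $q\mapsto\KL(x^\star\Vert q)$, applied once to the Halpern combination $x_{t+1}=(1-\beta)x_{t+\frac12}+\beta\bar x_t$ and once to a uniform average. The convexity check and the first inequality are fine. Your unrolling of $\bar x_{t+1}=\bar x_t+\frac{1}{t+2}(x_{t+1}-\bar x_t)$, $\bar x_0=x_0$, into $\bar x_t=\frac{1}{t+1}\sum_{s\le t}x_s$ is also correct. It is more careful than the paper's proof, which simply asserts $\bar x_t=\frac1t\sum_{s<t}x_s$ and then applies Jensen.

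The step that fails is your reconciliation in the last paragraph. Algorithm~\ref{alg:garip}'s weight $1/(t+1)$ with $t\ge1$ is not a different convention. It is the same recursion as the $1/(t+2)$ of Section~\ref{sec:analysis} once you shift the loop counter. After one pass $\bar x=\frac12(x_0+x_1)$, and after $t$ passes $\bar x$ is the mean of the $t+1$ iterates $x_0,\dots,x_t$. So the base-case check you propose would refute your identification, not confirm it.

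The mean of $x_0,\dots,x_{t-1}$ is the anchor that \emph{produced} $x_t$. Labelling it $\bar x_t$ would just shift the first inequality instead. Under any labelling, the anchor used to produce $x_{t+1}$ from $x_t$ contains $x_t$, whereas $\frac1t\sum_{s<t}x_s$ omits it. So the two displayed inequalities cannot both be indexed as stated for GARIP.

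The literal second bound can in fact fail. Take $A=\bigl(\begin{smallmatrix}2&0\\-1&1\end{smallmatrix}\bigr)$. Its row equilibrium is uniform, so the uniform start gives $\Phi_0=0$. For the uniform $y_0$ we have $Ay_0=(1,0)^\top$, so $x_1\ne x^\star$ and $\Psi_1=\KL\bigl(x^\star\Vert\tfrac12(x_0+x_1)\bigr)>0=\Phi_0$.

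The correct conclusion is the bound you already proved, $\Psi_t\le\frac{1}{t+1}\sum_{s\le t}\Phi_s$. The stated form holds only for an anchor lagged by one step, which GARIP does not use. Nothing downstream is affected, since the reduction in Conjecture~\ref{conj:li} uses only the first inequality.
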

\noindent\emph{(Proof in App.~\ref{app:proofs}, by convexity of $\KL$.)}

\begin{conjecture}[Global last iterate, at small $\beta$ --- with a failure mode at
large $\beta$]
\label{conj:li}
For \emph{small enough} $\beta$ and $\eta$, GARIP's last iterate converges to Nash from
any interior start. \emph{Reduction (two open steps collapse to one).} If the OMWU
half-step is potential-non-increasing, $\Phi_{t+\frac12}\le\Phi_t-D_t$ ($D_t\ge0$;
\cite{wei2021linear}), Lemma~\ref{lem:klconv} gives
$\Phi_{t+1}\le(1-\beta)\Phi_t+\beta\Psi_t-(1-\beta)D_t$, so since $1-\beta<1$,
\[
\Psi_t=\KL(x^\star\Vert\bar x_t)\to0\ \Longrightarrow\ \Phi_t\to0 :
\]
global last iterate follows from a \emph{single} estimate, $\bar x_t\to$ Nash.
\emph{But that estimate is exactly where the constant anchor can fail, and we found it
does.} In long-horizon sweeps (many random and near-boundary starts; repo script), at the
default small $\beta$ the last iterate descends to Nash on every game we tried --- but
slowly on large games (a Ces\`aro-limited $\sim t^{-1/3}$ rate, not acceleration).

\emph{The premature-consensus regime.} At larger $\beta$ on larger interior-Nash games
the anchor instead drives $x_t$ and $\bar x_t$ into \emph{premature consensus}
($\Vert x_t-\bar x_t\Vert\to0$ with exploitability frozen at $\sim10^{-1}$, so
$\bar x_t\not\to$ Nash), and the iterate stalls. The boundary is orderly --- the critical
anchor $\beta^\star(d)$ shrinks with game size, and the default $\beta\!=\!0.02$ sits
below it through $d\!\approx\!10$ (App.~\ref{app:phase}). Pure OMWU ($\beta\!=\!0$)
reaches Nash to $\sim10^{-7}$ on these games, so on an already-convergent base a strong
anchor over-regularizes --- the tabular face of the Animal-Shogi lesson
(\S\ref{sec:boardgames}). This is a \emph{basin-of-attraction} failure, not an undercut
of no-annealing: Prop.~\ref{prop:fp} (fixed points are Nash, anchor unbiased) is about
\emph{where} fixed points sit and is untouched; premature consensus is about \emph{which}
basin a large anchor steers into. The recommended \emph{deep-RL} point is unaffected ---
there collapse is reference \emph{staleness}, not anchor \emph{strength}, and the anchor
is load-bearing. A global proof for small $\beta$ and a sharp $\beta^\star(d)$ law are
open.
\end{conjecture}

\noindent The collapse law has a derivable \emph{discrete} (not continuous) origin --- a
linear one-mode map overshoots through the peak-lag state while the continuous-time
delay only damps; we defer the derivation to App.~\ref{app:discrete}.

\section{Experiments}
\label{sec:exp}

All code is pure JAX; matrix and poker exploitability are \emph{exact}
(LP-free NashConv); the deep-RL proxy freezes the policy and trains a fresh
best-response PPO agent. Unless noted, results are over 10 seeds.

\subsection{Matrix games: last-iterate convergence}

\begin{table*}[tp]
\centering
\caption{Matrix games: last-iterate exploitability (10 seeds, 5000 steps,
\emph{no annealing}). Lower is better; bold marks the best method on the
non-trivial random games. GARIP and R-NaD both reach near-Nash; MMD plateaus at
the QRE; SGA never converges.}
\label{tab:matrix}
\begin{tabular}{lcccccc}
\toprule
game & GARIP (ours) & R-NaD & MMD & Optimistic MD & Fictitious play & Self-play GA \\
\midrule
rps                 & 0.0003 & 0.0012 & 0.0000 & 0.0004 & 0.0247 & 1.999 \\
matching pennies    & 0.0001 & 0.0000 & 0.0000 & 0.0000 & 0.0200 & 1.998 \\
random $10\times10$ & \textbf{0.0028} & 0.0944 & 0.1207 & 0.0634 & 0.0283 & 2.522 \\
random $12\times6$  & \textbf{0.0052} & 0.0617 & 0.1400 & 0.0250 & 0.0169 & 2.101 \\
\bottomrule
\end{tabular}
\end{table*}

Table~\ref{tab:matrix} shows that at each method's \emph{default} configuration
GARIP reaches near-Nash on every game, R-NaD's default trails on the random
games (and only matches GARIP once $(\alpha,K)$ is tuned), and MMD plateaus at
the regularized equilibrium. The exploitability curves and the RPS simplex
trajectory (App.~\ref{app:poker}) visualize the mechanism --- SGA spirals outward
forever while GARIP walks straight into Nash and stays.

\subsection{Poker: Kuhn and Leduc}
On two imperfect-information benchmarks with \emph{exact} exploitability --- neural
Kuhn (MLP through the differentiable tree) and tabular Leduc (288 infosets, exact
CFR/BR) --- GARIP is the best regularized method at default tuning: Kuhn $0.076$ vs
R-NaD $0.30$ and MMD $0.54$ (approaching exact-BR fictitious play, $0.043$), and
Leduc $0.074$ vs R-NaD $0.23$ and MMD $0.55$, below CFR's $0.116$ at a matched
$3000$-iteration budget (a reference point, not a converged baseline). These are
\emph{default}-vs-default configs --- we did not separately tune R-NaD's
$(\alpha,K)$ on poker, and a tuned R-NaD would likely close much of the gap, exactly
as on the matrix games --- so this is consistent with the tie-at-each-tuned-best
finding, not a tuned-best poker win. Full tables and curves: App.~\ref{app:poker}.

\subsection{Deep-RL self-play: Coin Game}

We leave exact-solver territory with a deep-RL self-play loop on the Coin Game,
a canonical 2-player testbed vendored from JaxMARL (dependencies stripped to
avoid version coupling) and run zero-sum. A shared PPO actor-critic plays both
sides; the proxy exploitability freezes the policy and trains a fresh
best-response agent (lower / more negative is more robust).
Table~\ref{tab:coin} shows GARIP ($-11.29$) and R-NaD ($-11.07$) \emph{tie} and
are far more robust than the fixed magnet (MMD, $+7.83$) and the magnet-free
baselines (Figure~\ref{fig:coin}). The \emph{moving} reference matters; the
specific moving reference does not, \emph{for peak performance}.

\begin{table}[tbp]
\centering
\caption{Coin Game deep-RL self-play: proxy exploit return (10 seeds; lower is
more robust) at the default config. The five methods span the
opponent$\times$magnet grid; at this config the four magnet/opponent combinations
tie (Sec.~\ref{sec:method}), so the gap to MMD/naive is the magnet's
\emph{presence}, not its type.}
\label{tab:coin}
\small
\begin{tabular}{llc}
\toprule
method & opponent / magnet & exploit return \\
\midrule
\textbf{GARIP}        & avg / running-avg     & $\mathbf{-11.29 \pm 1.51}$ \\
\textbf{R-NaD}        & current / snapshot    & $\mathbf{-11.07 \pm 1.49}$ \\
naive self-play       & current / none        & $-1.72 \pm 2.22$ \\
fictitious self-play  & avg / none            & $-0.04 \pm 2.84$ \\
MMD                   & current / fixed       & $+7.83 \pm 0.93$ \\
\bottomrule
\end{tabular}
\end{table}

\begin{figure}[tbp]
\centering
\includegraphics[width=\linewidth]{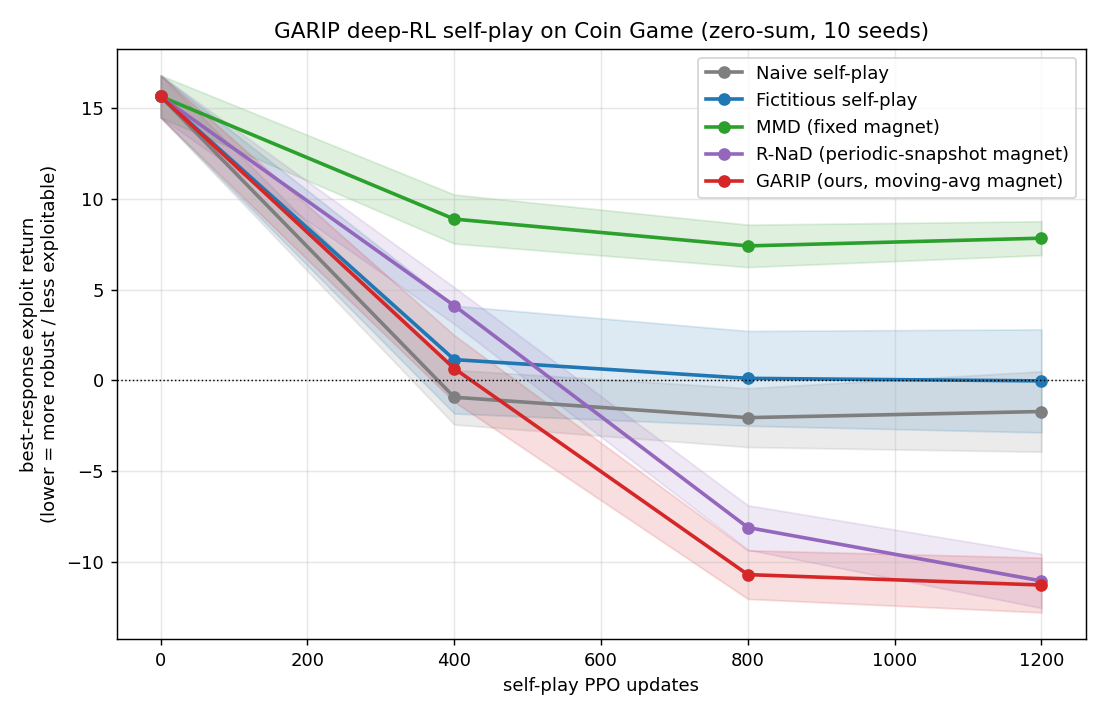}
\caption{Coin Game proxy exploitability vs.\ training (10 seeds). GARIP and
R-NaD reach the lowest (most robust) exploit return; MMD's fixed magnet is
worst.}
\label{fig:coin}
\end{figure}

\subsection{Four more deep environments: board games}
\label{sec:boardgames}

To test whether the deep-RL finding is Coin-Game-specific, we add four
structurally unrelated deep envs: the turn-based, perfect-information, strictly
zero-sum board games \textbf{Connect Four, Othello, Animal Shogi} and \textbf{Hex}
(via \texttt{pgx}). Board games sidestep the proxy metric's main weakness --- a
policy cannot look robust by being passive, because every game terminates with a
win/lose/draw, and the action distribution \emph{is} the strategy (so the magnet
regularizes exactly what is scored). A shared actor-critic plays both colours (the
observation is egocentric); the same opponent$\times$magnet ablation applies. We
reduce turn-based self-play to a single-agent problem from the learner's view
(learner moves, then the opponent replies), and measure robustness by the
\textbf{win-rate of a freshly trained best-responder} against the frozen policy
(lower $=$ harder to exploit; stochastic play with random openings so the rate
grades exploitability). 8 seeds; $1000$ self-play $+\,400$ BR updates
(Hex: $10{,}000$ self-play).

These four games \emph{map the scope} of the advantage and confirm a prediction the
mechanism makes --- a moving reference helps exactly where naive self-play cycles
(Table~\ref{tab:board}, Figure~\ref{fig:board}):
\begin{itemize}
  \item \textbf{Connect Four and Othello replicate the Coin Game ordering cleanly.}
  The moving-reference methods are far more robust than the fixed-magnet and
  magnet-free baselines --- on Connect Four GARIP ($0.43$) and R-NaD ($0.44$) tie
  below $0.5$ (a budget-matched exploiter \emph{loses} to them) while the rest are
  exploited $0.76$--$0.86$; on Othello GARIP ($0.55\!\pm\!0.06$) significantly edges
  R-NaD ($0.67\!\pm\!0.07$; 8 seeds, $p\!<\!0.01$), both far below $0.86$--$0.94$.
  \item \textbf{Animal Shogi is the predicted control}, and we confirmed the
  mechanism directly rather than inferring it. Here \emph{naive} self-play is
  hardest to exploit ($0.25$), exactly as the cycling account predicts for a game
  that does not cycle, and two diagnostics establish it. (i) Its
  exploitability \emph{decreases monotonically} over training
  ($0.52\!\to\!0.35\!\to\!0.28$ at $250/500/1000$ updates) --- it \emph{converges}
  rather than cycles (on Connect Four naive instead stays at $0.79$). (ii) It
  \emph{beats} the regularized policies head-to-head (naive$\,>\,$GARIP
  $0.59/0.40$, naive$\,>\,$R-NaD $0.67/0.32$; each over $n\!=\!2048$ games, so
  $\pm0.02$ at $95\%$), so it is the genuinely stronger player. This small
  ($3\times4$) piece-drop game does not induce the last-iterate cycling that
  regularization cures, so the magnet only handicaps playing strength. Even so,
  GARIP remains the best regularizer: GARIP$\,>\,$R-NaD head-to-head ($0.60/0.39$)
  and in exploitability ($0.43<0.52<0.87$ for GARIP/R-NaD/MMD).
  \item \textbf{Hex marks the learner's ceiling, not a counterexample.} A
  $10{,}000$-update stress test establishes it: \emph{every} method stays
  $0.89$--$0.96$ exploitable because an $11\times11$ connection game is beyond what
  this search-free PPO setup learns robustly, so the metric has nothing to
  separate (R-NaD/GARIP are marginally lowest). This bounds the \emph{learner}, and
  the obvious next step is a stronger one (search/AlphaZero-style).
\end{itemize}
This is a \emph{mechanism test}, not a self-correlation: the diagnostic is the
\emph{shape} of naive's training curve --- the learning dynamics, causally prior to
any endpoint. Naive's exploitability \emph{descends monotonically} on Animal Shogi
($0.52\!\to\!0.35\!\to\!0.28$, each step lower --- converging dynamics) but is
\emph{flat or stuck} on Connect Four ($0.90\!\to\!0.82$), Othello
($0.97\!\to\!0.93$) and Hex ($0.93\!\to\!0.93$). Read Table~\ref{tab:synthesis} as
\emph{dynamics (cause) $\to$ advantage (effect)}: \textbf{the moving reference helps
iff naive's dynamics fail to converge \emph{and} the game is learnable}, and all six
environments comply --- the single game whose naive dynamics descend (Animal Shogi)
is the single game where regularization does not help. GARIP is the best regularizer
wherever the comparison is meaningful.

\begin{table}[tbp]
\centering
\caption{Dynamics (cause) $\to$ advantage (effect). The \emph{shape} of naive
self-play's training curve, and whether the moving-reference advantage then appears.
The single environment whose naive dynamics descend is the single one without an
advantage. (Endpoint metric is per-env: last-iterate exploitability for
matrix/Coin Game, best-response win-rate for board games; lower $=$ more robust.)}
\label{tab:synthesis}
\small
\resizebox{\columnwidth}{!}{%
\begin{tabular}{llcc}
\toprule
environment & naive dynamics & best$_{\text{reg}}$ vs naive & advantage \\
\midrule
matrix (random) & orbits                        & $0.003$ vs $2.0$  & yes \\
Coin Game       & oscillates                    & $-11.3$ vs $-1.7$ & yes \\
Connect Four    & flat ($0.90\!\to\!0.82$)      & $0.43$ vs $0.79$  & yes \\
Othello         & flat ($0.97\!\to\!0.93$)      & $0.55$ vs $0.94$  & yes \\
Animal Shogi    & \textbf{descends} ($0.52\!\to\!0.28$) & naive wins ($0.25$) & no \\
Hex             & stuck (unlearnable)           & all $\sim\!0.9$   & no \\
\bottomrule
\end{tabular}}
\end{table}

\begin{table}[tbp]
\centering
\caption{Board-game self-play robustness: best-response win-rate vs.\ the frozen
policy (mean$_{\pm\text{std}}$ over 8 seeds; lower $=$ harder to exploit). Bold $=$
most robust per game.}
\label{tab:board}
\small
\setlength{\tabcolsep}{4pt}
\resizebox{\columnwidth}{!}{%
\begin{tabular}{lccccc}
\toprule
game & GARIP & R-NaD & MMD & naive & fict. \\
\midrule
Connect Four & $\mathbf{0.43_{\pm.06}}$ & $0.44_{\pm.05}$ & $0.76_{\pm.02}$ & $0.79_{\pm.09}$ & $0.86_{\pm.05}$ \\
Othello      & $\mathbf{0.55_{\pm.06}}$ & $0.67_{\pm.07}$ & $0.92_{\pm.01}$ & $0.94_{\pm.02}$ & $0.86_{\pm.06}$ \\
Animal Shogi & $0.43_{\pm.04}$ & $0.52_{\pm.06}$ & $0.87_{\pm.01}$ & $\mathbf{0.25_{\pm.07}}$ & $0.58_{\pm.12}$ \\
Hex (10k)    & $0.92_{\pm.04}$ & $\mathbf{0.89_{\pm.03}}$ & $0.96_{\pm.01}$ & $0.93_{\pm.03}$ & $0.96_{\pm.01}$ \\
\bottomrule
\end{tabular}}
\end{table}

\begin{figure}[tbp]
\centering
\includegraphics[width=\linewidth]{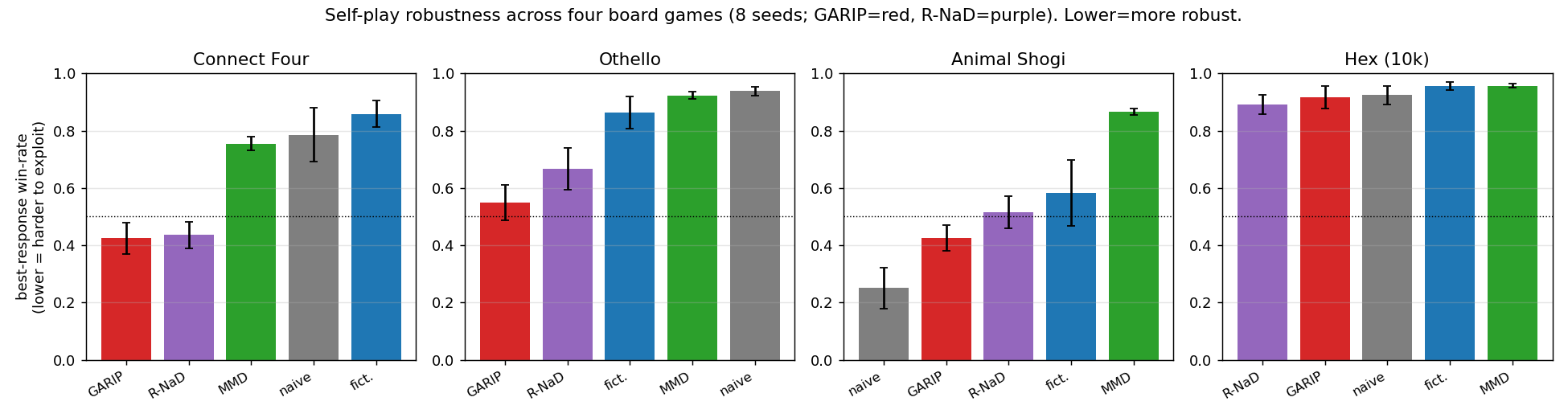}
\caption{Best-response win-rate across four board games (8 seeds; GARIP red, R-NaD
purple; lower $=$ more robust). Moving-reference methods dominate on Connect Four
and Othello; Animal Shogi does not cycle (naive wins, but GARIP is the best
regularizer); Hex saturates (all $\sim$0.9, under-trained).}
\label{fig:board}
\end{figure}

\subsection{Hyperparameter robustness}
\label{sec:robust}

We now sweep each method's hyperparameter grid and score \emph{every}
configuration. On a 5$\times$5 matrix-game grid (App.~\ref{app:sweeps}), GARIP's
median exploitability is $0.031$ vs R-NaD's $0.332$, and 56\% of GARIP configs
converge vs R-NaD's 20\%. R-NaD only works in a narrow band $\alpha\!\le\!0.5$ and
is sensitive to its reset period $K$; GARIP has no reset period and a much wider
basin.

\paragraph{And in deep RL --- a basin, not immunity.} We sweep the Coin Game on a
matched grid: GARIP over $\lambda\!\in\!\{0.25,0.5,1,2\}\times\rho\!\in\!\{1.25,2.5,
5,10,20,50\}\!\times\!10^{-3}$ (the two slowest $\rho$ push the EMA lag
$\sim\!1/\rho$ \emph{into GARIP's own stale regime}, symmetric to R-NaD's
$K\!=\!800$), R-NaD over $\lambda\times K\!\in\!\{100,200,400,800\}$, 10 seeds.
Over the full grids the aggregate collapse rates are \emph{statistically
indistinguishable} --- GARIP $25.4\%$ $[20.3,31.3]$, R-NaD $31.9\%$ $[25.2,39.4]$
(Wilson 95\% CIs) --- precisely because GARIP's grid now contains configs that
collapse. The advantage is \emph{where} each collapse boundary sits relative to
the rate one actually uses (Table~\ref{tab:sens-coin},
Figure~\ref{fig:sens-coin}): GARIP collapses only at slow averaging
($\rho\!\le\!0.005$) and is \textbf{essentially collapse-free across
$\rho\!\ge\!10^{-2}$} ($1/120$ runs --- the single failure is one seed at the
\emph{fastest} rate $\rho\!=\!0.05$, i.e.\ too-fast tracking, not staleness),
whereas R-NaD already collapses $25\%$ at the commonly-used $K\!=\!200$ and $55\%$
at $K\!=\!800$, safe only at the short $K\!=\!100$. GARIP's failure region thus lies an order of magnitude slower than
the standard Polyak rate; R-NaD's overlaps the reset periods used at scale. This
is the basin-width advantage --- \emph{not} collapse-immunity (an earlier draft's
$7\%$ figure was a grid artifact from a $\rho$ range that stopped short of the
stale regime, corrected here).

\paragraph{Exact-metric confirmation on Leduc.} Finally, the staleness
prediction --- R-NaD degrades monotonically in $\alpha K$ --- is confirmed on a
\emph{second} environment with an \emph{exact} metric (App.~\ref{app:sweeps}).
R-NaD's exploitability rises monotonically from 0.24 (low $\alpha$, low $K$) to
2.01 (high $\alpha$, high $K$) and \emph{never} reaches near-Nash on Leduc (no
config below 0.1), because the stale magnet leaves a residual bias even at its
best. GARIP reaches 0.04.

\medskip
\begin{center}
\small
\captionof{table}{Deep-RL collapse rate (exploit return $>0$) vs.\ staleness
depth (Coin Game, 10 seeds, 40 runs per column entry). Both methods collapse at
their stale end; GARIP's boundary lies \emph{past} its default $\rho\!=\!10^{-2}$,
R-NaD's lies \emph{inside} commonly-used $K$. Aggregate rates over the full grids
are indistinguishable (GARIP $25\%$, R-NaD $32\%$); the boundary is the story.}
\label{tab:sens-coin}
\begin{tabular}{cc@{\quad}cc}
\toprule
GARIP $\rho$ & collapse & R-NaD $K$ & collapse \\
\midrule
0.00125          & 0.75          & 100  & \textbf{0.00} \\
0.0025           & 0.50          & 200  & 0.25 \\
0.005            & 0.25          & 400  & 0.47 \\
\textbf{0.01}$^\dagger$ & \textbf{0.00} & 800  & 0.55 \\
0.02             & 0.00          & ---  & --- \\
0.05             & 0.03          & ---  & --- \\
\bottomrule
\end{tabular}
\par\smallskip {\footnotesize $^\dagger$ GARIP's default Polyak rate.}

\medskip
\includegraphics[width=\linewidth]{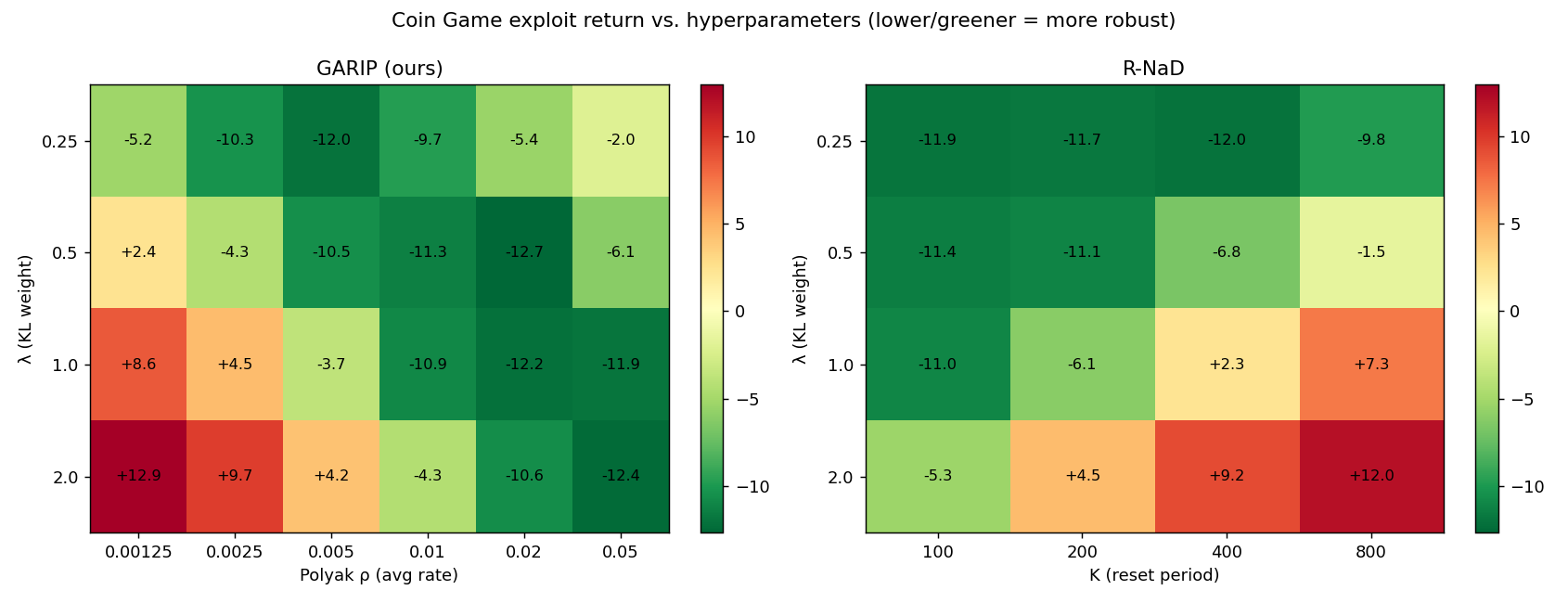}
\captionof{figure}{Deep-RL hyperparameter sensitivity (10 seeds; GARIP $\lambda
\times\rho$, R-NaD $\lambda\times K$). \emph{Both} methods have a stale-collapse
region (red): R-NaD at large $K$, GARIP at small $\rho$ (the two leftmost
columns). The point is location, not absence --- GARIP's red region sits at
averaging rates $\le\!0.005$, past its default $\rho\!=\!0.01$, whereas R-NaD's
reaches the commonly-used $K\!=\!200$. Matrix-grid and Leduc sweeps
(App.~\ref{app:sweeps}) confirm the same picture with an exact metric.}
\label{fig:sens-coin}
\end{center}

\FloatBarrier
\section{Honesty and limitations}

We are deliberately careful about what GARIP does \emph{not} do (extended version,
App.~\ref{app:limits}).
\begin{itemize}
  \item \textbf{No peak-performance claim; not collapse-immune.} GARIP only \emph{ties}
  R-NaD at each method's tuned best; the contribution is a \emph{wider safe basin}, not
  lower exploitability. It has its own staleness collapse --- a slow EMA
  ($\rho\!\approx\!10^{-3}$, $\lambda\!\ge\!0.5$) goes exploitable, symmetric to R-NaD's
  large-$K$ failure --- so the earlier ``no collapse region'' claim was wrong and we
  corrected it after a fair symmetric-$\rho$ sweep; the default $\rho\!=\!10^{-2}$ stays
  an order of magnitude clear.
  \item \textbf{Proxy metric and scope.} Exact metrics cover matrix games, Kuhn and
  Leduc; deep RL uses a best-response proxy that can reward \emph{passivity} on
  sparse-reward spatial games (so we use turn-based board games, which always terminate).
  The search-free learner does not reach robust play on $11\times11$ Hex, and Animal
  Shogi (non-cycling) needs no regularizer --- both honest boundaries, not counterexamples.
  \item \textbf{Which R-NaD.} We test \emph{fixed}-$(\lambda,K)$ R-NaD (standalone
  regularizer); DeepNash's outer schedule that anneals across convergence is untested,
  and we expect it trades staleness for annealing-sensitivity.
  \item \textbf{Local last iterate proven; global conjectured, with a failure mode.}
  Prop.~\ref{prop:localli} gives local convergence and Lemma~\ref{lem:klconv} reduces the
  global version to $\bar x_t\!\to$ Nash --- which our sweeps show \emph{fails} at large
  $\beta$ via premature consensus (Conjecture~\ref{conj:li}). This is a
  \emph{basin-of-attraction} failure, not a fixed-point/annealing one:
  Prop.~\ref{prop:fp}'s unbiasedness is untouched, and the deep-RL operating point has no
  large-$\beta$ analog. We claim no acceleration --- on an already-convergent base the
  anchor is a cost.
\end{itemize}

\section{Conclusion}

GARIP anchors self-play to the policy's running average --- a moving reference
that, like R-NaD's periodic snapshot, gives last-iterate Nash without annealing,
and matches R-NaD's peak performance. On robustness the two differ in their
reference's lag \emph{profile}: collapse empirically tracks the peak stale force,
and among causal averages the flat profile (the running average) is the
peak-minimizer (Prop.~\ref{prop:peakmin}), while a snapshot's peak is twice its mean.
This makes the running average the better \emph{default}: over the full grid the
aggregate collapse rates are indistinguishable, but at conventional parameterizations
--- R-NaD's mean-like $K$ under-counting its peak --- the standard $\rho\!=\!10^{-2}$
and $K\!=\!200$ collapse in $0/40$ vs $10/40$ seeds (including the $\lambda\!=\!2$
stress). The two tie at matched \emph{peak} lag, so a snapshot recovers safety by
shortening $K$ --- but GARIP gives the safe parameterization for free, where R-NaD
asks the user to know to do it. Its boundaries: an \emph{anticipatory}
(negative-weight) reference does better still on the stale side
(App.~\ref{sec:extrap}), and the edge over R-NaD appears only where naive self-play
cycles (Coin Game, Connect Four, Othello), not where it converges (Animal Shogi) or
where no method does (Hex). Within that scope, the running average is the reference
we would reach for, and the lag-profile lens says why.

\section*{Code Availability}
All code (the GARIP method and baselines, the tabular/poker/deep-RL experiments, the
local-stability and per-mode verification scripts, the premature-consensus sweep, and
the \LaTeX{} source) is available at \url{https://github.com/highcansavci/garip}.
Supplementary material (proofs, the discrete-dynamics derivation, the anticipatory-reference
study, and the $\beta^\star(d)$ phase boundary) is bundled in the appendices there.

\section*{Use of AI-Assisted Technologies}
The author used Claude (Anthropic), including Claude Code, to assist with the experimental
code, manuscript review and revision, and formatting. All AI-assisted content was verified
by the author, who is solely responsible for the methodology, results, and conclusions of
this paper.

\appendix

\section{Proofs}
\label{app:proofs}

\begin{proof}[Proof of Prop.~\ref{prop:peakmin} (peak-lag minimality)]
$\ell_{\max}=\max_t\ell_t\ge T^{-1}\!\sum_t\ell_t=\bar\ell$, with equality iff all
$\ell_t$ coincide. The EMA's weights $w_{t,s}=\rho(1-\rho)^{t-s}$ give the stationary
mean age $(1-\rho)/\rho$, constant in $t$; a snapshot held for $K$ steps has
$\ell_t\in\{0,\dots,K\!-\!1\}$, so $\bar\ell=\tfrac{K-1}{2}$ and
$\ell_{\max}=K\!-\!1=2\bar\ell$.
\end{proof}

\begin{proof}[Proof of Lemma~\ref{lem:contract} (anchor contraction)]
$x_{t+1}-\bar x_t = (1-\beta)x_{t+\frac12}+\beta\bar x_t-\bar x_t
= (1-\beta)(x_{t+\frac12}-\bar x_t)$.
\end{proof}

\begin{proof}[Proof of Prop.~\ref{prop:fp} (fixed points are Nash)]
At a fixed point $x_{t+1}=x_t=:x$ with $\bar x_t=x$, Lemma~\ref{lem:contract} forces
$x_{t+\frac12}=x$: the OMWU half-step fixes $x$. With $g_t=g_{t-1}=Ay$ this is
$x\propto x\odot\exp(\eta\,Ay)$, so $(Ay)_i$ is constant on $\operatorname{supp}(x)$.
For interior $x$ this holds for all $i$, i.e.\ $y$ makes the row player indifferent;
symmetrically $x$ makes the column player indifferent, so $(x,y)$ is Nash, giving
(ii); a Nash satisfies these on its support with the anchor inactive, giving (i).
For (iii), the magnet contribution at the fixed point is
$\beta(\bar x_t-x_{t+\frac12})=0$.
\end{proof}

\begin{proof}[Proof of Prop.~\ref{prop:localli} (local last-iterate convergence)]
The anchor $x_{t+1}=(1-\beta)x_{t+\frac12}+\beta\bar x_t$ and average
$\bar x_{t+1}=(1-\rho)\bar x_t+\rho x_{t+1}$ are coordinate-wise identical scalars, so
they commute with the base operator and the linearization splits per base mode $M$,
with $x_{t+\frac12}-x^\star=M(x_t-x^\star)$; the $(1-\beta)$ factor on the policy row is
Lemma~\ref{lem:contract}, and $\det J=(1-\beta)M[(1-\rho)+\rho\beta]-\beta\rho(1-\beta)M
=(1-\rho)(1-\beta)M$. \emph{Fast mode:} at $\rho=0$ the eigenvalues are exactly
$(1-\beta)M$ (modulus $(1-\beta)\mu$) and the marginal averaging mode $1$; the former
carries the mechanism --- a recurrent base ($\mu=1$) is strictly contracted to
$(1-\beta)<1$. \emph{Slow mode:} expanding the marginal root $\lambda=1+\rho\,c$ in the
characteristic polynomial $\lambda^2-(\mathrm{tr}\,J)\lambda+\det J$ gives, to first
order,
\[
\lambda_{\mathrm{slow}}=1-\rho\,\frac{(1-\beta)(1-M)}{1-(1-\beta)M}+O(\rho^2),
\]
so $|\lambda_{\mathrm{slow}}|=1-\rho\,\mathrm{Re}\,K+O(\rho^2)$ with
$K=(1-\beta)(1-M)/(1-(1-\beta)M)$. Writing $M=1-\delta+i\omega$ ($\delta,\omega=\Theta(\eta)$,
$\delta>0$), $K\to(1-\beta)\delta/\beta>0$ as $\eta\to0$, so $\mathrm{Re}\,K>0$ and
$|\lambda_{\mathrm{slow}}|<1$ for small $\eta$. (The naive real form
$1-\rho(1-\mu)(1-\beta)$ is \emph{not} correct --- $M$ is a rotation, so the coefficient
is complex.) For the Ces\`aro average of Sec.~\ref{sec:analysis} (weight
$1/(t{+}2)\!\to\!0$), $u_t:=x_t-\bar x_t$ obeys
$u_{t+1}=(1-\beta)M\,u_t+(1-\beta)(M-I)\bar\delta_t$ with
$\bar\delta_t:=\bar x_t-x^\star$ slowly varying; $u_t\to0$ geometrically, and the only
frozen anchor consistent with $u\!=\!0$ solves $(M-I)\bar\delta=0$, i.e.\
$\bar\delta=0$ since $1\notin\mathrm{spec}(M)$ --- so $\bar x_t\to x^\star$, closing
step~(a) locally. For the one-step base (MWU --- the recurrent, deep-RL regime) the
anchor and average are scalars on the \emph{entire} mode space, so the
block-diagonalization is \emph{exact} for any $m\times n$ game and
$\rho(\text{full})=\max_k\rho(J(M_k))$ holds identically (confirmed to
finite-difference precision, $<\!10^{-10}$, across sizes up to $8\times8$ with $10$
random payoffs each). GARIP's optimistic half-step carries a lagged gradient the anchor
does not touch, enlarging each per-mode block from $2\times2$ to $3\times3$; the
structure --- a fast mode scaled by $1-\beta$, a slow average mode near $1$ --- is
unchanged, and the $2\times2$ form agrees with the full spectrum to $<\!4\times10^{-4}$.
\end{proof}

\begin{proof}[Proof of Lemma~\ref{lem:klconv} (potential-level contraction)]
$x_{t+1}=(1-\beta)x_{t+\frac12}+\beta\bar x_t$ with convexity gives the first bound;
$\bar x_t=\frac1t\sum_{s<t}x_s$ with the same convexity (Jensen) gives the second.
\end{proof}

\section{A discrete-dynamics mechanism for the peak-force collapse law}
\label{app:discrete}
This appendix expands the footnote-level claim that the collapse law has a derivable
(discrete, not continuous) origin.
The collapse law is not pure phenomenology; it has a derivable origin, just not a
\emph{continuous} one. Linearizing a $2$-state game in continuous time, the stale
magnet is a delayed term $\dot z=\mu_0 z+\lambda(z(t\!-\!\tau)-z(t))$ on the
converging spiral $\mu_0=-\delta\pm i\omega$; its characteristic equation
$\mu=\mu_0+\lambda(e^{-\mu\tau}-1)$ has \emph{no} imaginary-axis root
($\mathrm{Re}$ at $\mu=i\nu$ is $-\delta+\lambda(\cos\nu\tau-1)\le-\delta<0$), so in
the continuous limit the magnet \emph{damps} --- Nash never loses stability. The
\emph{discrete} map, which is what runs, does collapse. With one mode
$z_{t+1}=(1-\beta)Mz_t+\beta\,r_t$, $M=(1-\delta)e^{i\omega}$ ($|M|<1$), the past
iterate $z_{t-\ell}=M^{-\ell}z_t$ had \emph{larger} amplitude $|M|^{-\ell}\!>\!1$. A
snapshot reaches the single peak-lag state $M^{-K}$ (full amplification
$|M|^{-K}\!\approx\!e^{\delta K}$), so a strong/slow magnet drives the step gain
$g=(1-\beta)M+\beta M^{-K}$ past $|g|\!=\!1$ --- a discrete \emph{overshoot},
governed by the \emph{peak} reach; the EMA smears its reach over lags
($\sum_s\rho(1-\rho)^sM^{-s}$), so its gain is bounded by the flat profile, not a
deep peak. This reproduces the peak-force law and its peak-vs-mean asymmetry. The toy
in fact \emph{over}-predicts --- an $e^{\delta K}$ gap, exponential in lag, sharper
than the roughly-linear empirical curves of Fig.~\ref{fig:lag}; the gap between the
two is exactly the phase cancellation and finite-step saturation the linear one-mode
model omits, so the over-prediction is a consistency check, not a contradiction. Only
the \emph{exact} threshold is therefore phase-dependent, which is why we calibrate the
law empirically (Fig.~\ref{fig:lag}) and prove the reference-shape optimality on top
of it (Prop.~\ref{prop:peakmin}).

\section{Beyond causal averages: an anticipatory reference}
\label{sec:extrap}
Proposition~\ref{prop:peakmin} is optimal only \emph{within} causal averages, and
Remark~\ref{rem:causal} flagged that an \emph{extrapolating} reference (negative
weights, lower effective lag) might beat it. We tested it directly. A double-EMA
magnet $(1{+}\gamma)\,\bar\theta-\gamma\,\bar{\bar\theta}$ ($\bar{\bar\theta}$ the EMA
of $\bar\theta$, gain $\gamma{=}1$) \emph{leads} the policy --- a negative-weight
reference, still causal but outside the \emph{convex} class --- which we swept
against plain GARIP on the
deep-RL $(\lambda,\rho)$ collapse grid (Table~\ref{tab:extrap}; $18$ runs/cell over
$\lambda\!\in\!\{0.5,1,2\}$).

\begin{table}[h]
\centering
\caption{Collapse rate (exploit return $>0$) vs.\ averaging rate $\rho$ for the
causal running average (GARIP) and the anticipatory double-EMA magnet
(Coin Game; lower $=$ safer, bold). Anticipation widens the safe basin on the
\emph{stale} (slow-$\rho$) side but overshoots at \emph{fast} $\rho$. Here each cell
is $18$ runs ($\lambda\!\in\!\{0.5,1,2\}\times6$ seeds), \emph{not} the $40$ of
Table~\ref{tab:sens-coin} (which also includes the safe $\lambda\!=\!0.25$), so
GARIP's rates run higher; Wilson $95\%$ CIs are wide ($\sim\!\pm0.2$), e.g.\ the
$\rho\!=\!0.05$ overshoot is $2/18$, CI $[0.03,0.33]$.}
\label{tab:extrap}
\small
\resizebox{\columnwidth}{!}{%
\begin{tabular}{lcc}
\toprule
$\rho$ & GARIP (causal avg.) & extrap.\ (anticipatory) \\
\midrule
$0.00125$ & $1.00$ & $\mathbf{0.56}$ \\
$0.0025$  & $0.67$ & $\mathbf{0.00}$ \\
$0.005$   & $0.33$ & $\mathbf{0.00}$ \\
$0.01$ (default) & $0.00$ & $0.00$ \\
$0.02$    & $0.00$ & $0.00$ \\
$0.05$    & $\mathbf{0.00}$ & $0.11$ \\
\bottomrule
\end{tabular}}
\end{table}

The prediction holds, with a twist. Anticipation \emph{subtracts} lag, so it widens
the safe basin exactly on the stale side: collapse falls $67\%\!\to\!0\%$ at
$\rho{=}0.0025$ and $33\%\!\to\!0\%$ at $\rho{=}0.005$, pushing the boundary from
$\rho\!\approx\!0.005$ down to $\rho\!\approx\!0.0025$. But over-anticipating a
\emph{fast} average overshoots --- at $\rho{=}0.05$ the extrapolated magnet newly
collapses ($0\%\!\to\!11\%$). The two are therefore \emph{complementary}, not
dominating: plain GARIP is safe for $\rho\!\in\![0.01,0.05]$, the anticipatory magnet
for $\rho\!\in\![0.0025,0.02]$, and GARIP's default $\rho{=}0.01$ sits safely in
both. So Prop.~\ref{prop:peakmin} is genuinely breakable by leaving the convex class
(negative weights) --- exactly as scoped --- and the peak-lag picture predicts both halves
(anticipation helps when lag is large, hurts when it is already small): the running
average is the robust default, anticipation is the right tool when one is forced to
slow averaging, and an adaptive magnet that tunes $\gamma$ to the operating lag is a
natural next step.

\section{The premature-consensus phase boundary}
\label{app:phase}
At a constant anchor the reduction of Conjecture~\ref{conj:li} (global last iterate
$\Leftarrow\bar x_t\!\to$ Nash) can fail: a strong anchor drives $x_t$ and $\bar x_t$
into agreement \emph{before} the average reaches Nash, freezing the iterate at a
non-Nash consensus. We map the boundary on antisymmetric $d\times d$ games (interior
Nash), classifying a run as stalled when exploitability is frozen high while
$\Vert x_t-\bar x_t\Vert\!\to\!0$. The critical anchor $\beta^\star(d)$ above which a
game stalls falls with size:
\begin{center}
\begin{tabular}{lccccc}
\toprule
$d$ & $\le 4$ & $6$ & $8$ & $10$ & $12$ \\
\midrule
$\beta^\star(d)$ & none ($\le0.4$) & $0.2$ & $0.1$ & $0.05$ & $0.02$ \\
\bottomrule
\end{tabular}
\end{center}
So $\beta^\star$ shrinks as the game grows and the default $\beta\!=\!0.02$ sits below
the boundary through $d\!\approx\!10$. Pure OMWU ($\beta\!=\!0$) reaches Nash to
$\sim\!10^{-7}$ on these games: on an already-convergent base a strong anchor
over-regularizes --- the tabular face of the Animal-Shogi lesson (a magnet is
unnecessary on a non-cycling game). The deep-RL operating point has no analog: there
collapse is reference \emph{staleness} (slow $\rho$ / long $K$), not anchor
\emph{strength}, and the anchor is load-bearing rather than a liability.

\section{Matrix curves, poker details, and the RPS trajectory}
\label{app:poker}

\begin{figure*}[t]
\centering
\includegraphics[width=0.7\linewidth]{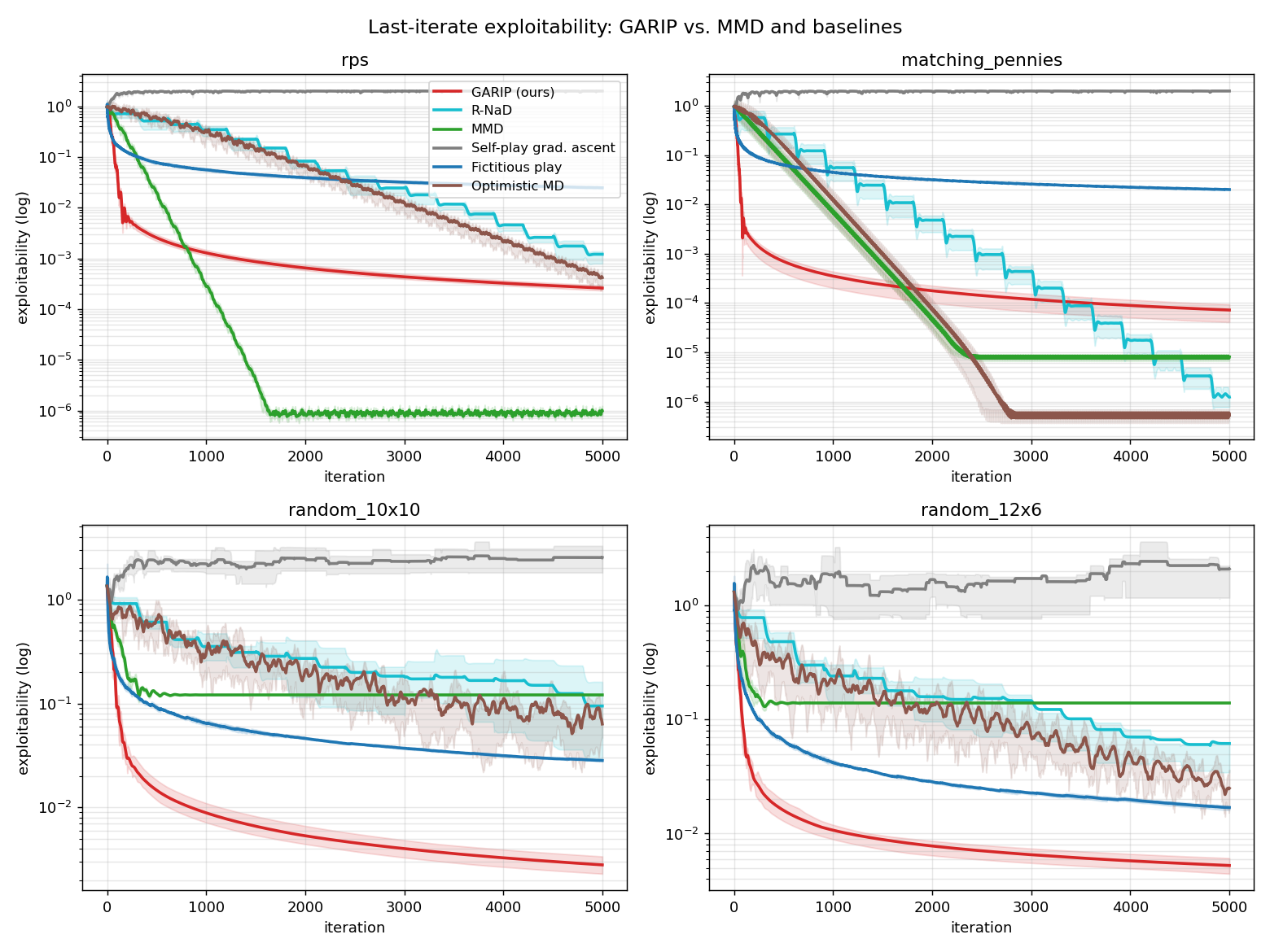}
\caption{Last-iterate exploitability vs.\ iteration on four matrix games
(10 seeds, shaded IQR). GARIP (red) reaches the lowest exploitability on the
random games; MMD plateaus at the QRE; self-play gradient ascent stays at
$\sim$2.}
\label{fig:curves}
\end{figure*}

\begin{figure}[h]
\centering
\includegraphics[width=0.7\linewidth]{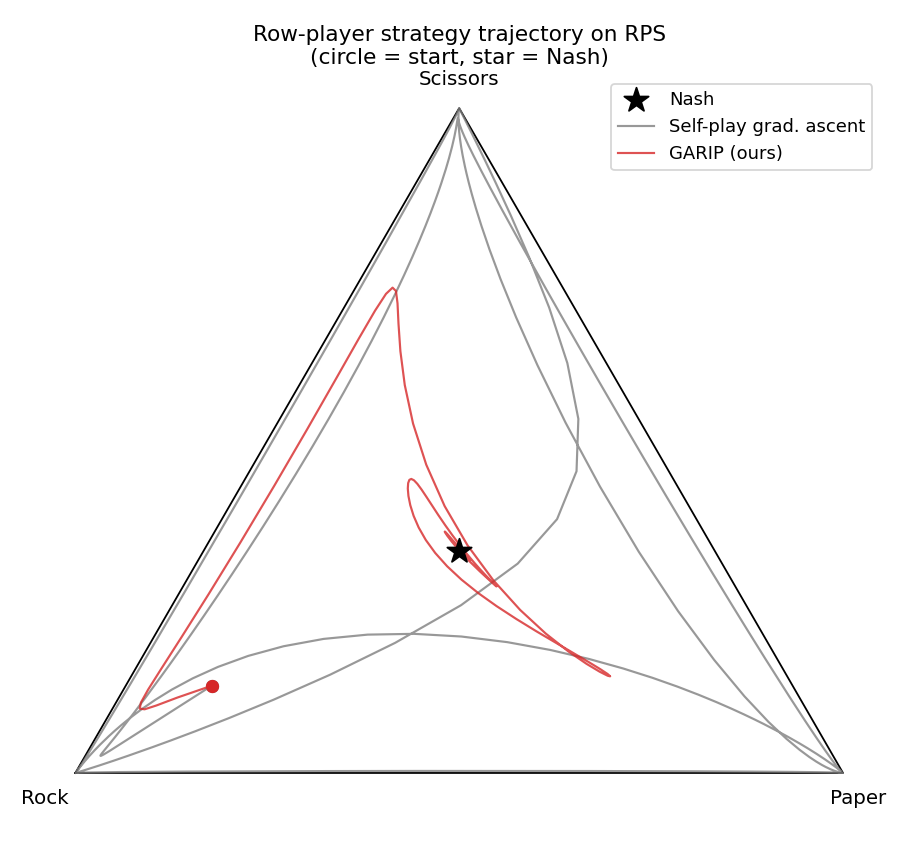}
\caption{Row-player strategy trajectory on the RPS simplex. Naive self-play
gradient ascent (grey) orbits the boundary; GARIP (red) spirals into Nash
(star).}
\label{fig:simplex}
\end{figure}

\paragraph{Kuhn (neural).} Each player is a small MLP over information-set
features, trained through the differentiable game tree; exploitability is exact
(brute force over the 64 pure strategies). Using the unified magnet framework
(Table~\ref{tab:kuhn}), GARIP (0.076) is the best regularized method, ahead of
R-NaD (0.30) and MMD (0.54), and approaches exact-best-response fictitious play
(0.043). These are each method's \emph{default} config --- we did not separately
tune R-NaD's $(\alpha,K)$ on the poker games, so the gap is default-vs-default and a
tuned R-NaD would likely close much of it, exactly as it does on the matrix games;
we do not claim a tuned-best poker win.

\begin{table}[h]
\centering
\caption{Neural Kuhn poker: average-strategy exact exploitability
(10 seeds). Same ordering throughout: moving $>$ snapshot $>$ fixed magnet.}
\label{tab:kuhn}
\small
\begin{tabular}{lc}
\toprule
method & exploitability \\
\midrule
fictitious play (exact BR)        & 0.043 \\
\textbf{GARIP (moving magnet)}    & \textbf{0.076} \\
R-NaD (periodic snapshot)         & 0.301 \\
naive self-play                   & 0.373 \\
MMD (fixed magnet)                & 0.538 \\
\bottomrule
\end{tabular}
\end{table}

\paragraph{Leduc (tabular, exact).} Leduc has 288 information sets and an exact
recursive best response and CFR solver (game value lands on the known
$-0.0856$). Because raw-gradient \emph{neural} self-play does not converge for
the magnet baselines here --- the reach/credit-assignment failure the
counterfactual target was built to fix --- we run the methods tabularly with
mirror ascent on exact counterfactual values. GARIP is fastest and lowest
\emph{at this iteration budget} (Table~\ref{tab:leduc}, Figure~\ref{fig:leduc}).
It nominally sits below CFR at 3000 iterations, but we do \emph{not} read this as
beating CFR: CFR's averaged strategy converges slowly early ($O(1/\sqrt{T})$),
and a mirror-ascent step is not the same unit of compute as a CFR iteration, so
CFR is a \emph{reference point}, not a baseline GARIP outperforms --- with more
sweeps it keeps falling to $\sim$0.04. As on Kuhn, these are default configs and
R-NaD was not separately tuned here.

\begin{table}[h]
\centering
\caption{Leduc hold'em: average-strategy exact exploitability (8 seeds,
3000 iterations). Default configs; CFR is shown at a matched iteration count as a
reference point (not converged --- it falls to $\sim$0.04 with more sweeps), not
as a baseline GARIP outperforms.}
\label{tab:leduc}
\small
\begin{tabular}{lc}
\toprule
method & exploitability \\
\midrule
\textbf{GARIP (moving magnet)} & \textbf{0.074} \\
CFR (reference)                & 0.116 \\
R-NaD (periodic snapshot)      & 0.227 \\
MMD (fixed magnet)             & 0.552 \\
naive self-play                & 0.741 \\
\bottomrule
\end{tabular}
\end{table}

\begin{figure}[h]
\centering
\includegraphics[width=0.7\linewidth]{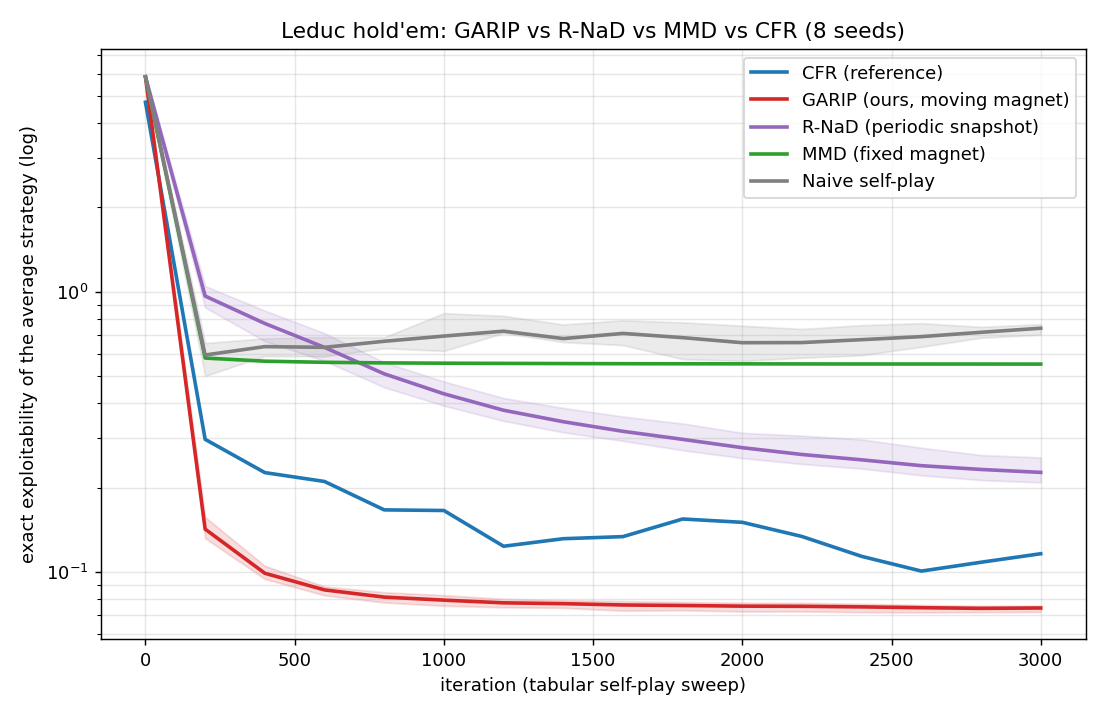}
\caption{Leduc hold'em: exact exploitability of the average strategy
(8 seeds). GARIP converges fastest; MMD's fixed magnet plateaus at the QRE;
naive self-play cycles.}
\label{fig:leduc}
\end{figure}

\section{Additional robustness sweeps}
\label{app:sweeps}
The matrix-grid and Leduc confirmations of the Coin Game sensitivity picture
(\S\ref{sec:robust}), deferred from the main text.

\begin{center}
\small
\captionof{table}{Robustness over the $5\times5$ matrix grid (GARIP $(\eta,\beta)$,
R-NaD $(\alpha,K)$); lower median / IQR / worst-case exploitability and a higher
converged fraction indicate a wider safe basin.}
\label{tab:sens-matrix}
\resizebox{\columnwidth}{!}{%
\begin{tabular}{lcc}
\toprule
over the grid & \textbf{GARIP} $(\eta,\beta)$ & R-NaD $(\alpha,K)$ \\
\midrule
median exploitability   & \textbf{0.031} & 0.332 \\
spread (IQR)            & \textbf{0.126} & 0.450 \\
worst config           & \textbf{0.772} & 1.209 \\
configs converged ($<\!0.05$) & \textbf{56\%} & 20\% \\
\bottomrule
\end{tabular}}

\medskip
\captionof{table}{Robustness over the Leduc grid, exact exploitability (6 seeds).
R-NaD never reaches near-Nash; GARIP's only failure is generic large-step
instability.}
\label{tab:leduc-collapse}
\resizebox{\columnwidth}{!}{%
\begin{tabular}{lcc}
\toprule
Leduc & \textbf{GARIP} $(\eta,\beta)$ & R-NaD $(\alpha,K)$ \\
\midrule
median                  & \textbf{0.080} & 0.427 \\
configs near Nash ($<\!0.1$) & \textbf{64\%} & 0\% \\
worst config            & \textbf{2.21}  & 2.70 \\
\bottomrule
\end{tabular}}

\medskip
\includegraphics[width=0.7\linewidth]{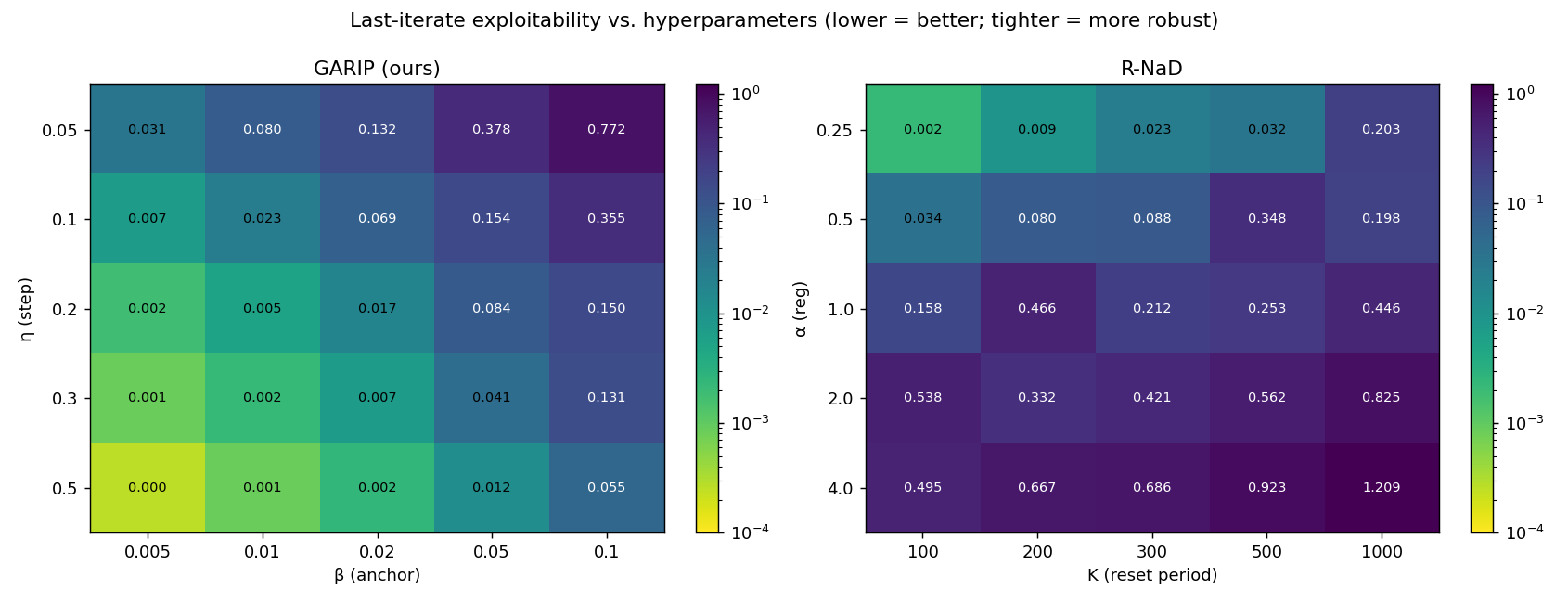}
\captionof{figure}{Matrix-game hyperparameter sensitivity. GARIP $(\eta,\beta)$
has a wide low-exploitability basin; R-NaD $(\alpha,K)$ degrades sharply outside
a narrow band.}
\label{fig:sens}

\medskip
\includegraphics[width=0.7\linewidth]{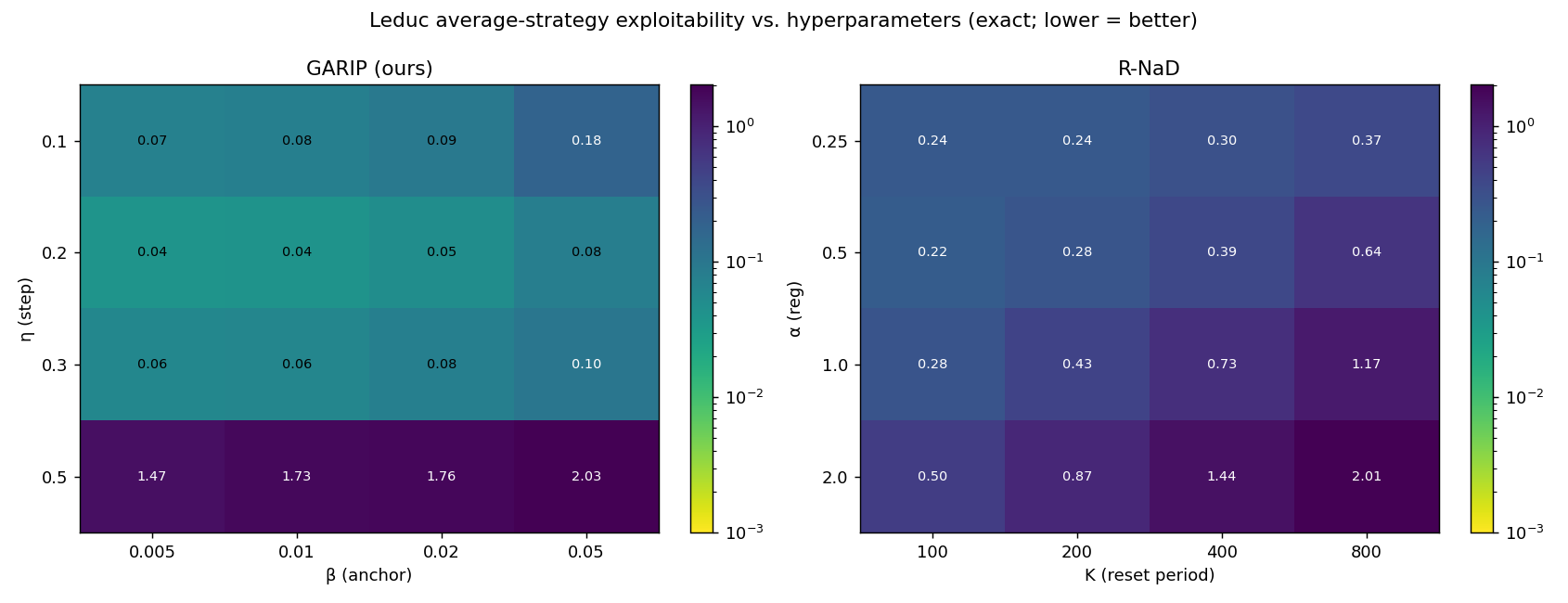}
\captionof{figure}{Leduc collapse heatmaps (exact exploitability, 6 seeds). R-NaD
shows the predicted monotonic $\alpha K$ gradient and never reaches near-Nash;
GARIP's only failure is generic large-step-size instability ($\eta\!=\!0.5$).}
\label{fig:leduc-collapse}
\end{center}

\paragraph{Alternative references we tested.} Beyond the causal running average and
the anticipatory double-EMA (App.~\ref{sec:extrap}), we tried two further mechanisms
and report them for completeness: a \emph{reciprocal-cycle} reference (anchor to the
double quantal-best-response $F(G(x))$, the literal CycleGAN cycle term) and a
\emph{drift-adaptive} averaging rate. Neither improved on the running average ---
the reciprocal cycle inherits the game's rotation and destabilizes, and adaptive
anticipation either leaves the simplex or overshoots --- consistent with
Prop.~\ref{prop:peakmin}: among causal references the running average is already
peak-lag optimal, and leaving that class buys only the narrow stale-side gain of
App.~\ref{sec:extrap}.

\section{Extended honesty and limitations}
\label{app:limits}
The full version of the limitations summarized in the main text.
\begin{itemize}
  \item \textbf{No peak-performance claim.} GARIP only \emph{ties} R-NaD at each
  method's tuned best (matrix games and Coin Game). The contribution is a
  \emph{wider safe basin} (collapse boundary past the default rate), not lower
  exploitability and not collapse-immunity.
  \item \textbf{GARIP has its own staleness collapse.} A slow EMA
  ($\rho\!\approx\!10^{-3}$) at $\lambda\!\ge\!0.5$ goes exploitable, symmetric to
  R-NaD's large-$K$ failure (Sec.~\ref{sec:theory}); GARIP is \emph{not}
  collapse-free. Its default $\rho\!=\!10^{-2}$ is robust, and the collapse region
  lies an order of magnitude away --- but the ``no collapse region'' claim of an
  earlier draft was wrong, and we corrected it after a fair (symmetric-$\rho$)
  sweep. Separately, the largest matrix step size $\eta\!=\!0.5$ blows up on
  Leduc --- generic step-size instability, a single isolated row.
  \item \textbf{Scope and the proxy metric.} Exact metrics cover matrix games,
  Kuhn and Leduc; the deep-RL evidence uses a fixed-budget best-response proxy.
  This proxy is delicate: on the simultaneous-move Coin Game it needs a strong
  enough best responder, and on a sparse-reward spatial game we found it can
  reward \emph{passivity} (a non-engaging policy is unexploitable by
  construction), so we did not use such games. Turn-based board games are the fix
  --- they always terminate with a result, so the metric cannot be gamed by
  inaction --- which is why the board panel (Sec.~\ref{sec:boardgames}) is the
  primary deep evidence. Its limits are honest: the search-free PPO learner does
  not reach robust play on $11\times11$ Hex even at $10{,}000$ updates (the metric
  saturates), and on a small non-cycling game (Animal Shogi) regularization is
  simply unnecessary. Stronger learners (search/AlphaZero-style) and
  Stratego-scale games remain future work.
  \item \textbf{Which R-NaD is under test.} We test \emph{fixed}-$(\lambda,K)$
  R-NaD: a periodic snapshot reset every $K$ steps under a constant KL weight
  $\lambda$ --- the form used as a standalone self-play regularizer, and the clean
  point of comparison for ``which moving reference.'' The full R-NaD behind
  DeepNash additionally runs an \emph{outer schedule} that updates the
  regularization policy and anneals the reward transform across convergence,
  partly to manage exactly the staleness we exploit. Our critique is of the
  fixed-schedule instantiation; whether it survives the scheduled version is open.
  We expect the schedule trades staleness for annealing-sensitivity --- a fixed
  $K$ becomes a schedule to tune, which is itself a hyperparameter-robustness cost
  --- but we have not measured this, and say so plainly.
  \item \textbf{Local last iterate proven; global is conjectured.}
  Sec.~\ref{sec:analysis} proves the structural facts --- the anchor contracts
  deviation from the average by exactly $1-\beta$ (Lemma~\ref{lem:contract}), and is
  unbiased so that fixed points are Nash, unlike MMD's QRE-biased fixed magnet
  (Prop.~\ref{prop:fp}) --- and now a \emph{local} last-iterate theorem
  (Prop.~\ref{prop:localli}): the anchor scales the base's rotation by $1-\beta$,
  crossing the stability boundary and turning a recurrent base into a contraction, so
  Nash is locally asymptotically stable in the last iterate at $\beta\!>\!0$. The
  potential-level reduction (Lemma~\ref{lem:klconv}) collapses the \emph{global} version
  to a single estimate, $\bar x_t\to$ Nash --- but our long-horizon sweeps show this can
  \emph{fail} (App.~\ref{app:phase}): at larger $\beta$ on larger interior-Nash games the
  anchor drives a \emph{premature consensus} ($x_t\!\approx\!\bar x_t$ at a non-Nash
  point) and the last iterate stalls, while pure OMWU ($\beta\!=\!0$) reaches Nash on the
  same games. So we do not claim global convergence or acceleration: the anchor is a
  robustness device for \emph{cycling} bases, not a last-iterate accelerator, and on an
  already-convergent base it is a cost. This is a \emph{basin-of-attraction} failure at
  large $\beta$, not a fixed-point or annealing issue --- Prop.~\ref{prop:fp}'s
  unbiasedness (no annealing needed) is untouched, and the deep-RL operating point has no
  large-$\beta$ analog (there collapse is reference staleness, not anchor strength).
  \item \textbf{Earlier mis-report, corrected.} An initial 2-seed, narrow-grid
  run suggested the opposite robustness conclusion; it under-sampled R-NaD's
  stale-reference failure region. The 10-seed wider-grid results here supersede
  it.
\end{itemize}

\end{document}